\title{On the Locality of the Lov\'asz Local Lemma }
\author{}
\newtheorem{theorem}{Theorem}
\newtheorem{lemma}[theorem]{Lemma}
\newtheorem{definition}[theorem]{Definition}
\newcommand{\Prob}[1]{\mathbf{Pr}\left[#1\right]}
\def\epsilon{\ensuremath{\varepsilon }}
\newcommand{\eps}{\ensuremath{\epsilon }}
\newcommand{\Vars}{\ensuremath{\mathcal V }}
\newcommand{\ents}{\ensuremath{\mathcal X }}
\newcommand{\COMMENTED}[1]{{}}
\newcommand{\hide}[1]{\COMMENTED{#1}}
\newcommand{\local}{\textsf{LOCAL}\xspace}
\newcommand{\LOCAL}{\local}
\newcommand{\congc}{\textsf{CONGESTED CLIQUE}\xspace}
\newcommand{\MPC}{\textsf{MPC}\xspace}
\newcommand{\lca}{\textsf{LCA}\xspace}
\newcommand{\lcas}{\textsf{LCA}s\xspace}
\newcommand{\vol}{\textsf{VOLUME}\xspace}
\author{Peter Davies-Peck\thanks{The author is supported by EPSRC New Investigator Award UKRI155: ``Distributed Lov\'asz Local Lemma''.}\\Durham University\\\url{peter.w.davies@durham.ac.uk}}
\date{}
\begin{document}

\begin{titlepage}
\maketitle
\thispagestyle{empty}
\begin{abstract}
	The Lov\'asz Local Lemma is a versatile result in probability theory, characterizing circumstances in which a collection of $n$ `bad events', each occurring with probability at most $p$ and dependent on a set of underlying random variables, can be avoided. It is a central tool of the probabilistic method, since it can be used to show that combinatorial objects satisfying some desirable properties must exist. While the original proof was existential, subsequent work has shown algorithms for the Lov\'asz Local Lemma: that is, in circumstances in which the lemma proves the existence of some object, these algorithms can constructively find such an object. One main strand of these algorithms, which began with Moser and Tardos's well-known result (JACM 2010), involves iteratively resampling the dependent variables of satisfied bad events until none remain satisfied.
	
	In this paper, we present a novel analysis that can be applied to resampling-style  Lov\'asz Local Lemma algorithms. This analysis shows that an output assignment for the dependent variables of most events can be determined only from $O(\log \log_{1/p} n)$-radius local neighborhoods, and that the events whose variables may still require resampling can be identified from these neighborhoods. This allows us to improve randomized complexities for the constructive  Lov\'asz Local Lemma (with polynomial criterion) in several parallel and distributed models. In particular, we obtain:
	
	\begin{itemize}
		\item A \textsf{LOCAL} algorithm with $O(\log\log_{1/p} n)$ node-averaged complexity (while matching the $O(\log_{1/p} n)$ worst-case complexity of Chung, Pettie, and Su).
		\item An algorithm for the \textsf{LCA} and \textsf{VOLUME} models requiring $d^{O(\log\log_{1/p} n)}$ probes per query.
		\item An $O(\log\log\log_{1/p} n)$-round algorithm for \textsf{CONGESTED CLIQUE}, linear space \textsf{MPC}, and Heterogenous \textsf{MPC}.
	\end{itemize}

\end{abstract}
\end{titlepage}

\section{Introduction}

In this paper we provide a new analysis of resampling-style algorithms for the  Lov\'asz Local Lemma that allows most events to have their dependent variables fixed quickly. This leads to improved algorithms in several distributed and parallel models.

\subsection{The Lov\'asz Local Lemma}
The Lov\'asz Local Lemma (LLL) is a fundamental result in probability theory, widely used to prove the existence of mathematical objects via the probabilistic method. It applies in the following setting:

Suppose we have a set $\mathcal V$ of independent random variables. Based on these random variables is a set $\mathcal X$ of $n$ `bad events', that we wish to avoid. Suppose also that each bad event occurs with probability at most $p<1$. Under what criteria is it possible to avoid all the bad events (i.e. when are all avoided with non-zero probability under sampling the underlying random variables)?

A simple union bound can show that, if $np<1$, then the probability of avoiding all bad events is at least $1-\sum_{E\in \mathcal X} \Prob{E} \ge 1-np >0$. But what if $np\gg 1$?

If the bad events were independent (i.e. depended on disjoint subsets of the variables), then we would avoid all bad events with probability $\prod_{E\in \mathcal X} (1- \Prob{E}) \ge (1-p)^n>0$. That is, we can avoid all bad events regardless of the values of $n$ and $p$. But, in many cases, it is unrealistic to expect that all of the events we are interested in are independent.

The Lov\'asz Local Lemma represents a compromise: it states that if the bad events are \emph{mostly} independent, then we can still get positive probability of avoiding them. Formally, assume that each bad event is dependent on (i.e. shares underlying random variables with) at most $d$ others. The LLL then states the following:

\begin{theorem} [Lov\'asz Local Lemma \cite{EL74,Shearer85}]\label{thm:lll}
	If $epd\le 1$, then there exists an assignment of the random variables that avoids all bad events.
\end{theorem}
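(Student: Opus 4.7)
The plan is to follow the classical Erdős--Lovász induction, proving a stronger quantitative statement about conditional probabilities from which the existence conclusion follows by a chain-rule computation. Let $G$ denote the dependency graph on $\mathcal X$ in which two events are adjacent iff they share an underlying variable; by hypothesis $G$ has maximum degree at most $d$. Write $N(E)$ for the neighborhood of $E$ in $G$, and note the key structural fact that $E$ is \emph{independent} (in the probabilistic sense) of any event built from variables disjoint from those appearing in $E$, i.e., of any set of events drawn from $\mathcal X \setminus (\{E\} \cup N(E))$.

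The inductive claim is: for every event $E \in \mathcal X$ and every $S \subseteq \mathcal X \setminus \{E\}$,
$$\Prob{E \mid \bigcap_{E' \in S} \overline{E'}} \le y, \qquad \text{with } y = \tfrac{1}{d+1}.$$
The base case $|S|=0$ follows from $\Prob{E} \le p \le y$, using $epd \le 1$ together with the standard estimate $(1-1/(d+1))^d \ge 1/e$. For the inductive step, partition $S = S_1 \sqcup S_2$ where $S_1 = S \cap N(E)$, so $|S_1| \le d$ and $E$ shares no variable with any event in $S_2$. Then
$$\Prob{E \mid \bigcap_{E' \in S} \overline{E'}} = \frac{\Prob{E \cap \bigcap_{E' \in S_1} \overline{E'} \mid \bigcap_{E' \in S_2} \overline{E'}}}{\Prob{\bigcap_{E' \in S_1} \overline{E'} \mid \bigcap_{E' \in S_2} \overline{E'}}}.$$
The numerator is at most $\Prob{E \mid \bigcap_{E' \in S_2} \overline{E'}}$, which equals $\Prob{E} \le p$ by independence of $E$ from the variables underlying $S_2$. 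The denominator expands by the chain rule into $\prod_j \Prob{\overline{E'_j} \mid \overline{E'_1} \cap \cdots \cap \overline{E'_{j-1}} \cap \bigcap_{E' \in S_2} \overline{E'}}$; each conditioning set here has strictly fewer than $|S|$ events and does not contain $E'_j$, so the inductive hypothesis gives each factor at least $1-y$, yielding the denominator at least $(1-y)^{|S_1|} \ge (1-y)^d$. Combining, $\Prob{E \mid \bigcap_S \overline{E'}} \le p/(1-y)^d \le y$, closing the induction.

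Once the claim holds, one final application of the chain rule yields
$$\Prob{\bigcap_{E \in \mathcal X} \overline{E}} = \prod_{i=1}^n \Prob{\overline{E_i} \mid \overline{E_1} \cap \cdots \cap \overline{E_{i-1}}} \ge (1-y)^n > 0,$$
and strictly positive probability implies that at least one assignment of the variables in $\mathcal V$ avoids every bad event in $\mathcal X$.

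The main obstacle is setting up the denominator bound cleanly: the chain-rule expansion over $S_1$ must be arranged so that at every stage the conditioning set has size strictly less than $|S|$ \emph{and} excludes the event being bounded, so that the inductive hypothesis is legitimately applicable; once this bookkeeping is correct the rest reduces to the numerical estimate $(1-1/(d+1))^d \ge 1/e$, which is the source of the factor of $e$ in the hypothesis $epd \le 1$.
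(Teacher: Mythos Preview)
The paper does not prove Theorem~\ref{thm:lll}; it is quoted as a classical result with citations to \cite{EL74,Shearer85}, so there is no proof in the paper to compare against. Your argument is the standard Erd\H{o}s--Lov\'asz/Spencer induction, and its structure is correct, but the numerical verification of the inductive step does not close under the stated hypothesis $epd\le 1$.

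Concretely, with $y=1/(d+1)$ the inductive step requires $p/(1-y)^d\le y$, i.e.\ $p\le d^d/(d+1)^{d+1}$. You claim this follows from $epd\le 1$ together with $(1-1/(d+1))^d\ge 1/e$, but these only give $p\le 1/(ed)$ and $d^d/(d+1)^{d+1}\ge 1/\bigl(e(d+1)\bigr)$ respectively, and $1/(ed)>1/\bigl(e(d+1)\bigr)$, so the desired inequality does not follow. Equivalently, $d^d/(d+1)^{d+1}=1/\bigl(d\,(1+1/d)^{d+1}\bigr)<1/(ed)$ because $(1+1/d)^{d+1}>e$ for every $d\ge 1$. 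Your argument therefore establishes the LLL under the slightly stronger hypothesis $ep(d+1)\le 1$ (the usual Spencer form), not the Shearer-attributed $epd\le 1$ stated in the theorem. Closing the gap between $d+1$ and $d$ genuinely requires Shearer's sharper analysis via the independence polynomial of the dependency graph; no choice of $y$ in the inductive scheme you use can recover it, since $y(1-y)^d$ is maximized precisely at $y=1/(d+1)$ with value $d^d/(d+1)^{d+1}<1/(ed)$.
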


Notice that this result does not have any dependence on $n$; instead, it requires only a criterion relating event probability $p$ and dependence degree $d$. Multiple criteria are studied; this particular one is due to Shearer \cite{Shearer85}. We state the \emph{symmetric} version of the LLL here; this is a special case of the \emph{asymmetric} version, in which the probability bound for each bad event is allowed to be different.

This result has proven a versatile and widely-used tool in probability theory, combinatorics, and algorithms, with applications in many areas including  routing and scheduling \cite{LMR94}, hash function families \cite{DSW04}, satisfiability of Boolean formulae \cite{GMSW09}, and integer programming \cite{CS00}. The original proof of the LLL was existential, but substantial work has been devoted to the \emph{constructive} LLL: to providing an algorithm that produces an assignment of the underlying random variables which indeed avoids all bad events (see, e.g., \cite{ Beck91,  CS00,PT09}). Notably, the celebrated result of Moser and Tardos \cite{MT10}, for which they received the 2020 Gödel Prize, provided an efficient LLL algorithm for almost all circumstances in which the lemma holds, and also gave the first parallel and (implicitly) distributed LLL algorithms.

\subsection{The Lov\'asz Local Lemma in Distributed and Parallel Models}
The structure of a Lov\'asz Local Lemma instance can be viewed as a \emph{dependency graph}: the bad events to be avoided are the nodes of the graph, and edges are drawn between events that are dependent (i.e. share underlying random variables). In this way, the LLL becomes a graph problem, and can be studied in graph-based models of parallel and distributed computing. Since we will be viewing the problem by considering the structure of its dependency graph, we will henceforth often refer to the bad events in our LLL instance as \emph{nodes}.

The LLL has proven central to the study of distributed algorithms. To see why, consider the scenario of a distributed graph problem on a graph with many nodes, but where the degrees of nodes are low (the number of nodes is generally denoted as $n$ and the maximum degree as $\Delta$).

The most efficient distributed algorithms for graph problems are generally randomized, but the analysis of fast randomized algorithms often runs into problems on these low-degree instances. A common occurrence is that one can show that an algorithm succeeds at any particular node in the graph (i.e., produce a locally-consistent output) with some failure probability $p$ that is small compared to the maximum degree $\Delta$, but not small enough compared to the number of nodes $n$ (e.g., $\frac{1}{\Delta^2}>p>\frac 1n$). This is insufficient to take a union bound to show that the algorithm succeeds on the entire input graph with good probability, and indeed we would expect that running the algorithm under most random seeds would cause some nodes to fail.

The running of this algorithm can be formulated as an LLL instance: the random variables in $\mathcal V$ are the random bits used by the algorithm to make its random choices, and the bad events in $\mathcal X$ are the events that the algorithm fails at any particular node. The dependency graph will therefore have a structure which is (or is similar to) a power graph of the original input graph. The dependency degree $d$ of the LLL instance will be some function of $\Delta$ (up to $\Delta^{2r}$ where $r$ is the running time of the algorithm, but possibly smaller). Then, if $epd\le 1$, the LLL shows that the algorithm succeeds everywhere with non-zero probability. This probability may still be too small to be useful when running the algorithm directly, but an LLL algorithm can be used to find a satisfying variable assignment. In essence, this means that distributed LLL algorithms act as meta-algorithms that can amplify the success probability of randomized algorithms for other problems. Based on this idea, Chang and Pettie \cite{CP19} showed that the distributed LLL plays a crucial role in distributed complexity theory. It has been used to show gaps in the distributed complexity hierarchy, and is the canonical problem for the important class of LCLs (locally-checkable labelling problems) with $\Theta(\log n)$ deterministic \local complexity and $poly \log \log n$ randomized \local complexity. 

The LLL has been recently studied in another distributed model, \textsf{CONGEST}, in which messages have a bandwidth restriction. Maus and Uitto \cite{MU21} show a $\log^{O(1)}\log n$-round \textsf{CONGEST} LLL algorithm for graphs with $d=O(1)$ (under some mild restrictions on the complexity of event specifications), while Halld{\'o}rsson, Maus, and Peltonen \cite{HMP24} give $\log^{O(1)}\log n$-round algorithms for some more restricted special cases of the LLL in general graphs, which are useful in applications such as $\Delta$-coloring \cite{HM24}.

The parallel LLL also has a long history of study, with a series of results showing that algorithms that are efficiently parallelizable in the \textsf{PRAM} model \cite{Alon91,MT10,HH17, Harris23}. The \textsf{PRAM} complexity of the LLL is now well-understood, but much less is known about the complexity on more modern parallel models, such as \MPC, that admit substantially more power per processor.

\subsection{Models and Prior Work}
In this section we introduce the distributed and parallel models in which we will show results, and give an overview of the LLL in those models.

\subsubsection{\LOCAL model}
The \LOCAL model, introduced by Linial \cite{Linial92}, is a simple distributed message-passing model. A graph $G=(V,E)$ is given as both the input graph and the communication graph of a network. The nodes of this graph are the processors, and aim to solve some graph problem on $G$. For deterministic algorithms, nodes are equipped with unique identifiers (IDs); for randomized algorithms, if IDs are required they can be generated uniformly at random from $[n^3]$ and will be unique with high probability. 

Nodes are allowed to perform any computation on information they possess (potentially even exponential-time computations). However, they are initially only aware of their own input, the values or estimates of relevant graph parameters (such as, in this case, $d$, $p$ and $n$), and their adjacent edges. To discover information from further in the input graph, nodes must communicate with their neighbors. 

Communication proceeds in synchronous rounds, in which nodes can send messages to their neighbors in $G$. No restriction is placed on messages; they can be as large as desired (and so, nodes may as well send all information they have to all neighbors every round). This means that the only limiting factor in \LOCAL is the locality of the region of the graph on which nodes can base their outputs: if a \LOCAL algorithm runs for $T$ rounds, then nodes must be able to give consistent outputs dependent only on their $T$-radius neighborhoods. After some number of communication rounds, nodes must terminate and provide their own local output (e.g. a color, or whether they are part of some output node set). The goal, as algorithm designers, is to minimize the number of communication rounds necessary for nodes to output an overall valid solution to some graph problem.

The usual measure of round complexity in the \local model is the first round in which all nodes in the graph have terminated. However, an alternative measure, recently introduced by Feuilloley \cite{F20}, is to take the average number of rounds until termination across all nodes. Barenboim and Tzur \cite{BY19} show improved node-averaged complexities for a variety of problems in low-arboricity graphs, while Balliu et al.\cite{BGKO22} provide lower bounds on the node-averaged complexity of maximal matching and matching independent set matching the well-known worst-case complexity lower bound of Kuhn, Moscibroda, and Wattenhofer \cite{KMW04}.

Moser and Tardos's seminal work \cite{MT10} gave the first parallel LLL algorithm, which also implicitly works as a distributed algorithm. This algorithm runs in $O(\log^2 n)$ rounds of randomized LOCAL for the LLL criterion $ep(d+1)< 1-\eps$ (for any constant $\eps>0$). The distributed LLL problem was formalized by Chung, Pettie and Su \cite{CPS17}, who, using a similar approach, presented an $O( \log^2 d\log_{1/ep(d+1)} n)$-round algorithm for the criterion $ep(d+1)< 1$ and an $O(\log_{1/epd^2} n)$-round algorithm for the criterion $epd^2<1$. The former result relied on computation of maximal independent sets (MIS), and its running time was subsequently improved to $O(\log d\log_{1/ep(d+1)} n)$ by the MIS result of Ghaffari \cite{Ghaffari16}. These resampling-style algorithms have node-averaged complexity asymptotically equal to their standard worst-case complexity, since their analysis did not allow any node to determine that it can safely output until the completion of the algorithm.

For cases when the dependency degree $d$ is low compared to $n$, Fischer and Ghaffari \cite{FG17} gave improved algorithms based on the approach of Molloy and Reed \cite{MR98}, later extended by Ghaffari, Harris, and Kuhn \cite{GHK18}. In light of the subsequent polylogarithmic network decomposition of Rozho\v{n} and Ghaffari \cite{RG20}, these works imply a distributed LLL algorithm requiring $O(d^2+\log^{O(1)}\log n)$ rounds, under the criterion $p< d^{-c}$ for sufficiently large constant $c$. Recently, this bound was improved to $O(\frac{d}{\log d}+\log^{O(1)}\log n)$ by Davies \cite{Davies23a}. These works do not explicitly give node-averaged round complexities, but it can be seen that the node-averaged complexity of \cite{Davies23a} is $O(\frac{d}{\log d})$. This is because the algorithm is a \emph{shattering}-style algorithm (a design technique for local algorithms that can be traced back to the LLL algorithm of Beck \cite{Beck91}). It consists of an $O(\frac{d}{\log d})$-round pre-shattering phase that determines the outputs for most nodes, and it can be shown that the probability that any particular node $v$ remains in an unsolved component of size $x$ is at most $2^{-x/9d}$. Then, the post-shattering stage produces an output for $v$ in $poly(\log x)$ rounds, and so the expected number of rounds until $v$ outputs is at most $O(\frac{d}{\log d} + \sum_{x=1}^{n} 2^{-x/9d} poly(\log x)) = O(\frac{d}{\log d})$. A more careful analysis can show that this node-averaged complexity bound applies with high probability, not just in expectation.

Therefore, for the LLL version we study (symmetric, with polynomial slack), the fastest algorithms are those of Chung, Pettie and Su \cite{CPS17} and Davies \cite{Davies23a}. This gives a worst-case (with high probability) round complexity of $O(\min\{\log_{1/p} n,\frac{d}{\log d}+ \log^{O(1)}\log n \})$, and a node-averaged complexity of $O(\min\{\log_{1/p} n,\frac{d}{\log d} \})$.

The Lov\'asz Local Lemma has proven important also for lower bounds, and was the first problem for which a lower bound was shown using the `round elimination' technique that has since produced a host of LOCAL lower bounds for various problems. The bound given, by Brandt et al.  \cite{BF+16}, was $\Omega(\log_{\log 1/p}  \log n)$ rounds, holding even for the much weaker criterion $p\le 2^{-d}$ (and this marks a sharp threshold in criterion strength, since Brandt, Grunau and Rozho\v{n} \cite{BGR20} demonstrate that weakening the criterion any further admits an $O(d^2+\log^* n)$-round deterministic algorithm). The lower bound of Brandt et al. holds even on trees, where it is essentially matched by an $O(\max\{\log_{\log 1/p} \log n, \log \log n/ \log \log \log n\})$ round algorithm \cite{CHLPU19}. However, on general graphs there remains a substantial gap between upper and lower bounds.

\subsubsection{\lca and \vol models}

The Local Computation Algorithms (\lca) model was introduced by Rubinfeld, Tamir, Vardi, and Xie \cite{RTVX11}. In the \lca model, the aim is for a centralized processor to answer queries about local parts of the solution to a graph problem - that is, when a node (or edge, depending on the problem) is queried, the processor must provide the local solution corresponding to that node (or edge). It does so by \emph{probing} nodes of the input graph, which we assume to have unique identifiers in $[n^{O(1)}]$ to be identifiable by the processor. Each probe of a node reveals any input of that node, as well as its adjacent edges. The processor must then provide an output for the queried node dependent only on the probed region of the graph (as well as a random string, for randomized \lcas).  The output given by the processor to all queries it receives must form part of a consistent, valid solution to the overall graph problem.

We will focus on the \emph{probe complexity} of \lca algorithms, i.e. the number of probes needed to respond to a query. Many works on \lca are also concerned with the processing time of the processor, but we cannot easily analyze this for the general LLL - it would depend heavily on the number of variables and the complexity of evaluating events, neither of which are constrained in the general LLL formulation. So, processing time is generally only analyzable for specific applications of the LLL.

The Lov\'asz Local Lemma is central to the model and has been well-studied: indeed, many works on \lca  since the model's inception (e.g. \cite{RTVX11,ARVX12,DK23}) have been based on analysis of the LLL, and studied problems that are LLL applications. Achlioptas, Gouleakis, and Iliopoulos \cite{AGI20} give an \lca algorithm for the general LLL, though this algorithm requires $n^{\Omega(1)}$ probes. Brandt,  Grunau, and Rozho{\v{n}} \cite{BGR21} show that the \lca complexity of the LLL in \emph{constant-degree graphs} (i.e. with $d=O(1)$) is $\Theta(\log n)$. \local model algorithms can be used in the \lca model, with \lca probe complexity $= d^{\text{\local complexity}}$, and so the \local algorithm of \cite{CPS17} implies a $d^{O(\log_{1/p} n)}$-probe \lca. The \local algorithm of Davies \cite{Davies23a} immediately implies a $d^{O(\frac{d}{\log d}+\log^{O(1)}\log n)}$-probe \lca, but again a slightly better bound can be obtained by considering the shattering nature of the algorithm: $d^{O(\frac{d}{\log d})}$ probes suffice to identify whether a node remains after the pre-shattering phase, and such nodes form components of size $\log n \cdot d^{O(1)}$. Therefore, a node can find its post-shattering component and determine its final output using $\log n \cdot d^{O(1)} \cdot d^{O(\frac{d}{\log d})} = 2^{O(d)}\log n$ probes. 

The \vol model, introduced by Rosenbaum and Suomela \cite{RS20}, is a similar model that aims to capture the size of the region that nodes in a graph need to explore in order to determine their output (as opposed to the radius of that region, as studied in the \local model). The primary differences from \lca are that the probed region is required to be connected (there is no such requirement in \lca), and randomness is local (i.e. each node has its own independent source of random bits, that are revealed when that node is probed, whereas in \lca shared randomness is assumed). These distinctions are often of little consequence, and that is the case here: the best \vol algorithms for the LLL are the same as the \lcas above.

\subsubsection{\congc, linear-space \MPC, and Heterogenous \MPC models}
\congc and \MPC are related parallel models, and allow all-to-all communication between processors (though \congc arose as a variant of the distributed \textsf{CONGEST} model). 

In \MPC, a group of machines is applied to solve a problem, with the input initially distributed arbitrarily over the machines. The primary restriction is that each machine has a space bound of $\mathfrak s$ words, and all input, output, and local computation must fit within this space bound. Algorithms in \MPC proceed in synchronous rounds, and in each round, machines may perform some computation on the information they have in their local memory, send up to $\mathfrak s$ words of information to other machines (divided however they choose), and receive up to $\mathfrak s$ words of information to other machines. The number of machines permitted is generally as few, or close to as few, as are needed to store the entire input $I$ (i.e. $\tilde O(\frac{|I|}{\mathfrak s})$ machines) and the goal is to minimize the number of communication rounds needed to solve the problem.

There are two main space regimes studied for graph problems: linear-space \MPC, in which $\mathfrak s = \tilde O(n)$ space, and sublinear-space (also called low-space) \MPC, in which $\mathfrak s = n^{\eps}$ for some constant $\eps\in (0,1)$. In this paper we study the former. In both linear-space and sublinear-space \MPC, many common `housekeeping' operations, such as searching, sorting, and computing prefix sums, are known to be computable in $O(1)$ rounds \cite{GSZ11}, and so round-complexity bottlenecks usually stem from the size of graph regions that must be collected to a machine to solve tasks.

The Heterogenous \MPC model, introduced recently by Fischer, Horowitz, and Oshman \cite{FHO22}, aims to demonstrate that only one machine need have linear space in order to achieve much of the power of linear-space \MPC. In this model, there is a single large machine with $\tilde O(n)$ space, and all other machines have space in the sublinear regime (i.e. $n^\eps$ for some $\eps \in (0,1)$). 

\congc does not have explicit space bounds for machines, but instead has a bandwidth bound: in each communication round, only $O(\log n)$ bits (i.e. $O(1)$ words) can be sent between each pair of machines. This may seem like a much more stringent requirement than that of \MPC; however, due to the constant-round routing of Lenzen \cite{L13}, messages can be routed to observe these bandwidth constraints, rendering \congc essentially equivalent to linear-space \MPC \cite{BDH18}.

The LLL is less well-studied in these models than in \local, \lca and \vol. To our knowledge, there is no LLL algorithm designed specifically for \congc, linear-space \MPC, or Heterogenous \MPC, and the best round complexity known in all three is $O(\min\{\log\log_{1/p} n,\log d\}) $. Here, the first term follows by simulating the algorithm of \cite{CPS17} or \cite{MT10} using graph exponentiation, and the second follows by simulating the randomized pre-shattering part of the algorithm of \cite{Davies23a} or \cite{FG17} using graph exponentiation, and then collecting the shattered graph onto a single machine to solve sequentially. Note that in the latter case, we may assume $d=o(\log n)$, since otherwise the first approach would be faster, and in that case the balls of radius $d^{\frac{d}{\log d}} = n^{o(1)}$ necessary to simulate the pre-shattering part of \cite{Davies23a} fit into machines' local memory.

However, the LLL is important in the study of lower bounds in low-space \MPC:  LLL-related problems are one of the classes for which component-instability is shown to help surpass conditional component-stable lower bounds (\cite{CDP24}, with the lower-bound framework revised from \cite{GKU19}).

\subsection{Our Approach}

In this section we give an overview of our approach.

\subsubsection{Version of the LLL}

The version of the LLL we will study is the symmetric LLL with \emph{polynomial criterion} (sometimes also known as polynomially-weakened \cite{GHK18,Davies23a}). That is, we study instances in which

\[p = O(d^{-c}),\]

for some constant $c > 1$. While weaker than the strongest criteria for which the LLL holds (such as that of Theorem \ref{thm:lll}), this criterion is still sufficient to capture many applications of the LLL, and has seen extensive study in previous sequential and distributed models (e.g. \cite{Beck91, FG17, Davies23a}).

Specifically, the criterion for which our results apply will be

\[p = d^{-(10+\Omega(1))}\]

We will also assume that $d=\omega(1)$ (i.e. increases with $n$), and therefore $p=o(1)$, since otherwise the \local LLL algorithms of Fischer and Ghaffari \cite{FG17} and Davies \cite{Davies23a} already imply the desired results. 

We are interested in randomized algorithms for the LLL that succeed \emph{with high probability (w.h.p.)}, i.e. with probability at least $1-n^{-c}$ for some constant $c\ge 1$. In our algorithms, this constant $c$ can be amplified to be arbitrarily large at only a constant-factor increase in complexity, so we do not specify it explicitly.

\subsubsection{LLL algorithm}

The starting point of our approach is the `simple algorithm' of Chung, Pettie, and Su \cite{CPS17} (Algorithm \ref{alg:CPSLLL}). Our analysis can also be applied to the parallel algorithm of Moser and Tardos \cite{MT10}, which would improve the exponent in the LLL criterion required (though it would still be polynomially weaker than Theorem \ref{thm:lll}). However, this would come at the expense of increased complexity in the models we study, since the algorithm requires repeated computation of a maximal independent set, which is comparatively costly. So, to optimize the final complexities we achieve, we restrict our focus here to the algorithm of \cite{CPS17}:

\begin{algorithm}[H]
	\caption{Chung, Pettie, and Su's algorithm}
	\label{alg:CPSLLL}
	\begin{algorithmic}
		\While{there are satisfied (bad) events}
		\State Let $\mathcal I$ be the set of satisfied events with locally minimal IDs (i.e. with no satisfied neighbor with a lower ID)
		\State Resample all dependent variables of events in $\mathcal I$.
		\EndWhile
	\end{algorithmic}
\end{algorithm}

It is shown that this algorithm terminates w.h.p. in $O(\log_{1/epd^2} n)$ iterations, so long as $epd^2<1$ \cite{CPS17}, meaning that for our criterion it requires $O(\log_{1/p} n)$ iterations. Furthermore, as in Moser and Tardos's algorithm, $O(np)$ events are resampled in total. So, most events retain their initially sampled variable values in the final output, and the average amount of `work' done per event is only $O(p)$.

However, in distributed and parallel models, the analysis of this style of algorithm had a major drawback: while most events never need to resample their variables, there was no known way to identify such `secure' events without running the entire algorithm. In a distributed setting, this means that all processors (representing the bad events in the LLL instance, and acting as nodes in the dependency/communication graph) must remain `active' and participating throughout the entire algorithm, even though most will eventually output their initial sampled variable values. The reason for this is that the dependency structure of the LLL instance can cause long `chains' of resampled events, resulting in events that appeared safe to eventually need to resample variables.

We provide a new analysis of the LLL that solves this problem. Specifically, we show that there exists a property that can be determined only from the $O(\log\log_{1/p} n)$-radius neighborhood of an event, that determines whether a node is at risk of being resampled later in the algorithm. In the \LOCAL model, this allows most nodes to identify that they can terminate and output their initial variable values within $O(\log\log_{1/p} n)$ rounds. In the \lca, \vol, and parallel models, it bounds the size of the region that is needed to determine the output values of an event's dependent variables.

\subsubsection{Structure of witness trees}

The main idea behind our results comes from examining the shape of \emph{witness trees}, structures used to analyze how resampling events' variables propagates through the LLL dependency graph. In particular, we focus on some event $v$, and examine whether the witness trees rooted at $v$ \emph{expand} away from $v$, by at least a constant factor every distance hop. That is, whether the number of nodes at a distance $i$ from $v$ is at least a constant factor higher than the number of nodes at distance less than $i$.

If witness trees do expand, then within radius $O(\log\log_{1/p} n)$ they must reach a size of at least $\omega(\log_{1/p} n)$ events, since the size is exponentially increasing in the radius. However, it can be shown that w.h.p. no witness trees of this size occur, and so this cannot be the case.

So, witness trees of $\omega(\log\log_{1/p} n)$ radius must fail to expand at at least one distance threshold $R$ (and we call such non-expanding trees truncated at this threshold \emph{narrow witness trees}).

The other major part of our analysis is to show that the probability of a narrow witness tree occurring is small \emph{even if behavior outside the radius $R$ is adversarial}. This is because this external behavior only affects the nodes at distance $R$ from $v$, which, by the definition of a narrow witness tree, constitute only a small fraction of the tree. So, the remainder of the tree (the nodes at distance less than $R$ from the root) is still sufficiently unlikely to occur.

Then, we show that the probability that any node is the root of such a narrow witness tree is small, and the induced graph of such roots \emph{shatters} into small pieces. Most of the resampling steps of Algorithm \ref{alg:CPSLLL}, therefore, can be performed on this smaller shattered graph, with the output for all other nodes able to be determined much earlier.

\section{Our Results}

Our algorithm returns exactly the same output as Algorithm \ref{alg:CPSLLL} (under the same randomness). What we show is a way in which most nodes can quickly detect whether they can retain their initial variable values for the final output. Our main technical result is the following:

\begin{lemma}[Insecure events]\label{lem:insecure}
Upon sampling a randomness table of variable values from their distributions, there is a property of events that we call \emph{insecure} satisfying the following:

\begin{itemize}
	\item Whether an event is insecure depends only on its $O(\log\log_{1/p} n)$-radius neighborhood in the dependency graph $G$.
	\item The dependent variables of secure (i.e. non-insecure) events are fixed (will not be resampled) after $O(\log\log_{1/p} n)$ resampling rounds of Algorithm \ref{alg:CPSLLL}, w.h.p.
	\item The induced graph $G'$ of insecure nodes has $np^{\Omega(\log\log_{1/p} n)} = o(n)$ nodes and edges, w.h.p.
	\item The connected components of $G'$ are of size $d^{O(\log\log_{1/p} n)}$ and diameter $O(\log_{1/p} n)$ (with respect to distances in $G$), w.h.p.
	\item $G'$ admits an $(O(\log\log_{1/p} n),O(\log^2\log_{1/p} n))$ network decomposition (again with respect to distances in $G$), w.h.p.
\end{itemize}
\end{lemma}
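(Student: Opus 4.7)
The plan is to define insecurity via a local witness-tree condition, argue that every node requiring late-round resampling must exhibit this condition, and prove the condition itself occurs rarely.

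First I would invoke the witness-tree framework of Moser-Tardos as adapted to Algorithm \ref{alg:CPSLLL} by Chung, Pettie, and Su: if event $v$ has its variables resampled at round $t$, then there is a proper witness tree $\tau_v^t$ rooted at $v$ of depth $t$ in the dependency graph $G$, with $\Prob{\tau \text{ appears in the execution}} \le p^{|\tau|}$ and at most $(ed)^s$ proper tree shapes of size $s$ rooted at $v$. Set $r = \Theta(\log\log_{1/p} n)$, and call a rooted tree \emph{expanding at depth} $i$ if the number of nodes at depth exactly $i$ is at least the number of nodes at depths $<i$. A tree that expands at every depth up to $r$ has at least $2^{r-1} = \log_{1/p}^{\Omega(1)} n$ nodes, and a standard union bound then shows that no such expanding depth-$r$ witness tree appears anywhere in the execution w.h.p.\ under the criterion $p = d^{-(10+\Omega(1))}$.

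I would then declare $v$ \emph{insecure} if its $r$-neighborhood in $G$ supports, for some $R \le r$, a proper tree rooted at $v$ that expands at every depth $i < R$, fails to expand at depth $R$, and has every event satisfied under the initial randomness table restricted to the $R$-ball. The first bullet (locality) is immediate from this definition. For the second bullet, suppose secure $v$ is resampled at some round $t > r$: its witness tree $\tau_v^t$ has depth $>r$, so it either expands through all depths $\le r$ (impossible w.h.p.) or first fails to expand at some depth $R \le r$, in which case its truncation at depth $R$ is an insecurity certificate lying inside the $r$-ball of $v$, a contradiction.

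The heart of the proof is the per-node probability bound for the third bullet. Any narrow-at-$R$ certificate has at least $2^{R-1}$ nodes in its interior (depths $<R$), all lying strictly inside the $R$-ball, whose Moser-Tardos variables are independent of everything outside. Thus, even granting the adversary control over all external variables and over which specific tree shape arises from boundary interactions, the probability that any narrow-at-$R$ certificate occurs is at most $(edp)^{2^{R-1}}$, obtained by summing $p^{|\tau|}$ over the $(ed)^{|\tau|}$ shapes. Calibrating $r$ so that the dominant $R$ gives $p^{\Omega(r)}$ yields $\Prob{v \text{ insecure}} \le p^{\Omega(r)}$, hence $\Exp{|V(G')|} \le np^{\Omega(r)} = o(n)$; concentration to w.h.p.\ follows by a bounded-differences argument since each insecurity indicator depends only on the $r$-neighborhood. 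For the fourth bullet, a connected component of $G'$ in $G$ is contained in the union of the narrow witness trees certifying its insecure nodes, each of radius $\le r$ and size at most $d^{O(r)}$; standard shattering arguments then bound the component's diameter in $G$ by $O(\log_{1/p} n)$ w.h.p. For the fifth bullet, applying an off-the-shelf randomized network decomposition (e.g., Ghaffari-Kuhn or Rozho\v{n}-Ghaffari) to the sparse shattered graph $G'$ produces the claimed $(O(\log\log_{1/p} n), O(\log^2\log_{1/p} n))$ parameters.

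The main obstacle will be executing the adversarial-boundary argument rigorously. Witness trees in the resampling analysis are coupled to the entire execution history, so their interior events are not a priori independent of what happens outside the $R$-ball; the saving grace — that narrowness forces the interior to dominate the boundary and that the interior's variables remain fresh under external conditioning — demands a careful conditioning argument on the Moser-Tardos tape restricted to the $R$-ball. A related subtlety is calibrating the insecurity predicate precisely enough that the second and third bullets hold simultaneously, which may require distinguishing between the combinatorial condition used for local detection and the slightly stronger condition used in the probability bound.
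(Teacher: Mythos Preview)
Your high-level plan (define insecurity via narrow witness trees, show late-resampled nodes must be insecure, bound the probability) matches the paper's, but two concrete steps fail as written. First, you truncate and measure narrowness by \emph{tree depth}, whereas the paper uses \emph{$G$-distance from the root}. This matters for the adversarial-boundary argument: an interior event's randomness-table indices are determined by the tree only when all of its $G$-neighbors are also captured by the generation process. That is guaranteed by a $G$-radius restriction (a node at $G$-distance $<R$ has all neighbors at $G$-distance $\le R$) but \emph{not} by depth truncation, since a $G$-neighbor resampled much earlier sits deeper in the tree and may be cut off. So your claim that interior variables are ``independent of everything outside'' does not follow from depth-based narrowness. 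The paper instead defines an $R$-\emph{possible} tree by restricting the execution-log generation to $B_R(\text{root})$, calls a tree $\eps$-narrow if at most an $\eps$-fraction of its nodes lie at $G$-distance exactly $R$, and then obtains probability $\le p^{(1-\eps)|T|}$ for such a tree to be $R$-possible, independently of randomness outside $B_{R+1}(\text{root})$.

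Second, your predicate admits certificates for arbitrarily small $R$; for $R=1$ it degenerates to ``$v$ is initially satisfied'', and summing your bound $(edp)^{2^{R-1}}$ over $R\ge 1$ is dominated by the first term and yields only $O(dp)$, not $p^{\Omega(\log\log_{1/p} n)}$, so the third bullet fails. The paper avoids this by additionally requiring the narrow certificate to have size at least $\lambda\ell = \Theta(\log\log_{1/p} n)$. Its ``must be risky'' lemma then ranges $R$ over the interval $[\lambda\ell,(\lambda+2)\ell]$ of $G$-radii and shows, by an expansion induction across $G$-radii rather than depths, that if none of the corresponding $R$-possible trees is $\eps$-narrow then the last one has size exceeding $5\log_{1/p} n$, contradicting $E_{good}$. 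A related issue: your certificate checks satisfaction under the ``initial'' randomness table, but witness-tree events are satisfied under specific \emph{resampled} table entries determined by the tree structure, so the depth-$R$ truncation of $\tau_v^t$ would not actually match your stated predicate.
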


Based on this result, we improve the complexity of LLL in several models. Firstly, we show a \LOCAL algorithm that improves the node-averaged complexity exponentially, while retaining the worst-case complexity of \cite{CPS17}:

\begin{theorem}\label{thm:local}
	LLL can be solved in \LOCAL with  $O(\log\log_{1/p} n)$ node-averaged complexity (and $O(\log_{1/p} n)$ worst-case complexity).
\end{theorem}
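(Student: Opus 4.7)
The plan is to apply Lemma~\ref{lem:insecure} almost off-the-shelf, combining a short local insecurity test with the standard worst-case behavior of Algorithm~\ref{alg:CPSLLL}.

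\textbf{Step~1 (local test, $O(\log\log_{1/p} n)$ rounds).} Every node first draws its initial sample from the variable distributions, then collects the state of its $O(\log\log_{1/p} n)$-radius neighborhood in the dependency graph $G$. By the first bullet of Lemma~\ref{lem:insecure}, this neighborhood suffices to decide whether the node is insecure. Each secure node outputs its initial sampled values and terminates; this agrees with the final output of Algorithm~\ref{alg:CPSLLL} since, by Section~2 together with the second bullet of Lemma~\ref{lem:insecure}, secure nodes retain their initial values throughout the Algorithm~\ref{alg:CPSLLL} execution.

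\textbf{Step~2 (resampling on the insecure part, up to $O(\log_{1/p} n)$ rounds).} Insecure nodes continue to simulate Algorithm~\ref{alg:CPSLLL}. Their secure neighbors' values are fixed (by Lemma~\ref{lem:insecure}) and can be treated as constants known already from Step~1, so no further communication with secure nodes is needed. By the analysis of Chung, Pettie and Su~\cite{CPS17}, Algorithm~\ref{alg:CPSLLL} terminates within $O(\log_{1/p} n)$ rounds w.h.p.\ under our criterion $p = d^{-(10+\Omega(1))}$. Every node thus terminates by round $O(\log_{1/p} n)$, giving the worst-case bound.

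\textbf{Averaging across nodes.} Write $L = \log_{1/p} n$. Each secure node contributes $O(\log L)$ to the sum of round counts, and each insecure node contributes at most $O(L)$. By the third bullet of Lemma~\ref{lem:insecure}, the number of insecure nodes is $np^{\Omega(\log L)}$ w.h.p. Expanding, $p^{\log L} = 2^{-\log L \cdot \log(1/p)} = L^{-\log(1/p)}$, and since $p = o(1)$ (assumed in Section~1.5.1) we have $\log(1/p) = \omega(1)$, so the number of insecure nodes is at most $n \cdot L^{-\omega(1)} = o(n/L)$. Summing gives
\[
\sum_v \mathrm{rounds}(v) \ \le\ n\cdot O(\log L) \ +\ o(n/L)\cdot O(L) \ =\ O(n\log L),
\]
and dividing by $n$ yields a node-averaged complexity of $O(\log L) = O(\log\log_{1/p} n)$.

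\textbf{Main obstacle.} With Lemma~\ref{lem:insecure} in hand the argument is essentially arithmetic; the one quantitative delicacy is that the insecure set must be smaller than $n/L$ by a super-constant factor, so that the $O(L)$ worst-case cost incurred by insecure nodes is absorbed into the target average $O(\log L)$. This is exactly the slack provided by the bound $np^{\Omega(\log L)}$ of Lemma~\ref{lem:insecure} once the assumption $p = o(1)$ is used to conclude $\log(1/p) = \omega(1)$.
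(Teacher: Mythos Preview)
Your overall structure and the averaging calculation are correct, but Step~1 contains a genuine error: you assert that secure nodes ``retain their initial values throughout the Algorithm~\ref{alg:CPSLLL} execution,'' and accordingly have them output their \emph{initial} sampled values. This misreads the second bullet of Lemma~\ref{lem:insecure}, which only guarantees that the dependent variables of secure events are fixed \emph{after} $O(\log\log_{1/p} n)$ resampling rounds, not from the start. A secure node can certainly be resampled during the first $(\lambda+2)\ell$ steps---being non-risky only rules out being the root of a possible $\eps$-narrow witness tree of size at least $\lambda\ell$, not of smaller witness trees---so its final values need not equal its initial ones. Outputting initial values would therefore in general be inconsistent with the output of Algorithm~\ref{alg:CPSLLL}, and could leave secure bad events satisfied. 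The error also propagates into Step~2, where you treat secure neighbors' (initial) values as the correct fixed constants.

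The fix is immediate and does not affect your complexity bounds: after collecting the $O(\log\log_{1/p} n)$-radius neighborhood in Step~1, each node locally simulates the first $(\lambda+2)\ell$ resampling steps of Algorithm~\ref{alg:CPSLLL} (each step depends only on a $2$-neighborhood, so $(\lambda+2)\ell$ steps fit within the collected ball), and secure nodes then output their \emph{current} variable values. This is precisely what the paper's proof does. With this correction, your Step~2 and the averaging argument go through as written.
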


Next, we show an algorithm for the \lca and \vol models with improved probe complexity per query:

\begin{theorem}\label{thm:lca}
	LLL can be solved in \lca and \vol in $d^{O(\log\log_{1/p} n)}$ probes per query.
\end{theorem}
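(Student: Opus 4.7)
The plan is to reduce each query to a local computation on an $O(\log\log_{1/p} n)$-radius region of the dependency graph, treating Lemma~\ref{lem:insecure} as a black box. Let $R = O(\log\log_{1/p} n)$ be the radius guaranteed by that lemma. Given a query on event $v$, I would first probe the radius-$R$ ball $B_R(v)$ in the dependency graph $G$; since $G$ has degree at most $d$, this takes $d^{O(R)} = d^{O(\log\log_{1/p} n)}$ probes and reveals all IDs, sampled initial variable values, and adjacencies needed to evaluate the predicate ``is $v$ insecure?'' promised by the first bullet of Lemma~\ref{lem:insecure}.

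If $v$ is secure, I would simulate $O(R)$ rounds of Algorithm~\ref{alg:CPSLLL} purely on the probed ball. A straightforward induction shows that deciding whether an event $u$ at distance at most $R-r$ from $v$ joins $\mathcal{I}$ in round $r$, or has any shared variable resampled, requires only information from distance at most $R-r+1$ in round $r-1$; hence $B_R(v)$ suffices to simulate $R$ rounds as seen from $v$. By the second bullet of Lemma~\ref{lem:insecure}, after these $R$ rounds the dependent variables of $v$ are already fixed w.h.p., so the simulated values are $v$'s final output.

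If instead $v$ is insecure, I would explore its connected component $C$ in the induced subgraph $G'$ on insecure nodes. Starting from $v$, I perform BFS in $G'$: at each newly discovered insecure node, I probe the $R$-balls of its $G$-neighbors to classify each as secure or insecure, continuing through the insecure ones. By the third and fourth bullets of Lemma~\ref{lem:insecure}, $|C| = d^{O(R)}$ w.h.p., so this exploration examines $d^{O(R)}$ neighborhoods, each costing $d^{O(R)}$ probes, for $d^{O(R)}$ probes total. I additionally probe the $R$-balls of the secure boundary of $C$ (also $d^{O(R)}$ nodes) and run the secure-case simulation there to obtain their fixed final variable values. With $C$ and its boundary in hand, I locally execute any deterministic distributed LLL algorithm on $C$ using the $(O(R), O(R^2))$ network decomposition from the fifth bullet of Lemma~\ref{lem:insecure}; this is polynomial-sized local computation on already-probed data and requires no further probes. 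The output for $v$ is then read off this solution.

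The main obstacle I foresee is ensuring that this per-query procedure produces a globally consistent assignment matching that of Algorithm~\ref{alg:CPSLLL}. This is handled by Lemma~\ref{lem:insecure}: the secure/insecure classification is a deterministic function of the (globally shared, in \lca) randomness table, so two queries touching the same event agree on its classification, and both the secure-case simulation and the deterministic algorithm run on each component $C$ are deterministic functions of the probed data. For \vol, it additionally suffices to observe that every probed region is connected — the initial $R$-ball, the $R$-balls of subsequent insecure nodes reached along $G$-edges, and the boundary $R$-balls — so the union of probed nodes remains connected throughout and the same probe budget applies.
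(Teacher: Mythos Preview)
Your overall strategy—probe an $O(\log\log_{1/p} n)$-radius ball, classify $v$, and if $v$ is insecure explore its component $C$ in $G'$ together with an $O(\log\log_{1/p} n)$-radius cushion—matches the paper, and your probe accounting is correct. The gap is in how you finish on $C$. You propose to ``locally execute any deterministic distributed LLL algorithm on $C$'' using the network decomposition from Lemma~\ref{lem:insecure}. But the residual problem on $C$ is \emph{not} a fresh LLL instance: the variables shared with the secure boundary have already been fixed to specific values (those reached after $(\lambda+2)\ell$ resampling steps of Algorithm~\ref{alg:CPSLLL}), and you have no control over the conditional probabilities of the bad events in $C$ under that partial assignment. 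There is no reason the polynomial criterion, or any criterion a generic deterministic LLL algorithm requires, still holds. All Lemma~\ref{lem:insecure} guarantees is that \emph{some} extension exists—namely, the one Algorithm~\ref{alg:CPSLLL} itself would produce—and the paper explicitly remarks in its conclusions that it does not know how to exploit the network decomposition to solve the residual instance by other means.

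The fix is immediate and is exactly what the paper does: once you have $C$ and its $O(\log\log_{1/p} n)$-radius neighborhood in hand, continue simulating Algorithm~\ref{alg:CPSLLL} itself on the risky nodes until termination. After step $(\lambda+2)\ell$ only risky nodes can resample (Lemma~\ref{lem:mustberisky}), all of their non-risky neighbors' variables are already frozen, and the whole execution finishes within $O(\log_{1/p} n)$ steps w.h.p.; since this is pure local computation on already-probed data it costs no additional probes. This also automatically gives global consistency, because every query reconstructs the \emph{same} execution of Algorithm~\ref{alg:CPSLLL} from the same randomness table. A minor side remark: each resampling round needs a radius-$2$ (not radius-$1$) horizon, since local minimality among satisfied events depends on neighbors' satisfaction; this does not affect your asymptotics but the induction should decrement by $2$.
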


This improves exponentially over the $d^{O(\log_{1/p} n)}$-probe \lca and \vol algorithm implied by Chung, Pettie, and Su's \local result \cite{CPS17}, and also improves substantially over the $2^{O(d)}\log n$-probe algorithm implied by Davies's \local result \cite{Davies23a} for most values of $d$.

Finally, we show an algorithm for the parallel models of \congc, linear-space \MPC, and Heterogenous \MPC, which also improves exponentially over the previous best round complexities:

\begin{theorem}\label{thm:mpc}
Constrained instances of LLL can be solved in \congc, linear-space \MPC, and Heterogenous \MPC in $O(\log \log \log_{1/p} n)$ rounds. The \MPC algorithms use $O(|I| + n^{1+o(1)})$ total space, where $I$ is the input.
\end{theorem}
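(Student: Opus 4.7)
The plan is to apply Lemma~\ref{lem:insecure} via graph exponentiation in \congc and linear-space \MPC, followed by solving the residual problem on a single machine. First, each node samples its dependent variables locally and then performs $O(\log\log\log_{1/p} n)$ rounds of graph exponentiation in the dependency graph $G$, doubling the known-neighborhood radius each round using Lenzen's constant-round routing~\cite{L13} in \congc (or the equivalent sort-and-shuffle primitives in linear-space \MPC). After these rounds, every node knows its $O(\log\log_{1/p} n)$-radius neighborhood in $G$, which by the first bullet of Lemma~\ref{lem:insecure} suffices for it to locally decide whether it is insecure. The second bullet then lets every secure node finalize its output as its sampled variable values and drop out, w.h.p.

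Next, I would deal with the residual subgraph $G'$ of insecure nodes. By the third and fourth bullets of the lemma, $|V(G')|+|E(G')| = o(n)$ and each component has size $d^{O(\log\log_{1/p} n)}$. Using $O(1)$ additional rounds of Lenzen routing or \MPC shuffling, I would gather $G'$ together with the random bits and event specifications needed to continue Algorithm~\ref{alg:CPSLLL} on its components onto a single $\tilde O(n)$-space machine --- in Heterogenous \MPC, the unique large machine. The collector then runs the remainder of Algorithm~\ref{alg:CPSLLL} sequentially and broadcasts the final variable values back to the relevant nodes in $O(1)$ further rounds. The total round count is $O(\log\log\log_{1/p} n)$ and the total space is $O(|I|+n^{1+o(1)})$, since the exponentiation balls plus $G'$ contribute $n^{1+o(1)}$ beyond the input.

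The main obstacle will be verifying that the exponentiation phase really fits within the space and bandwidth budgets. The balls transmitted during graph exponentiation can in principle have size up to $d^{O(\log\log_{1/p} n)}$, and in Heterogenous \MPC each small machine has only $n^{\eps}$ memory, so a naive implementation could blow up either per-pair bandwidth, per-machine memory, or the round count. This is precisely what the ``constrained instances'' hypothesis is for: it should enforce that each ball fits in $n^{o(1)}$ words and the total ball mass is $n^{1+o(1)}$, so that each exponentiation round costs only $O(1)$ \congc / \MPC rounds and the small machines' memory suffices. Once this is pinned down, the collection, sequential solve, and broadcast steps are standard applications of linear-space \MPC / \congc primitives, and the three model variants differ only in which machine plays the role of the final collector.
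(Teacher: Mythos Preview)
Your overall strategy matches the paper's, but there are two genuine gaps.

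\textbf{Missing case split on $d$.} You claim the ``constrained instances'' hypothesis should force each $O(\log\log_{1/p} n)$-radius ball to fit in $n^{o(1)}$ words. It does not: Definition~\ref{def:constrained} only bounds per-event space by $poly(d)$, which is useless when $d=n^{\Theta(1)}$. In that regime, even a constant-radius ball can contain $n^{\Theta(1)}$ edges and $poly(d)=n^{\Theta(1)}$ words of event data, overflowing the $n^{\eps}$ memory of a small Heterogenous \MPC machine (and the $O(n)$-word per-round routing budget in \congc). The paper therefore splits into two cases. When $d=n^{o(1)}$, the ball size $d^{O(\log\log_{1/p} n)}=d^{o(\log_d n)}=n^{o(1)}$ really is sublinear, and your exponentiation argument works. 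When $d=n^{\Theta(1)}$, the paper abandons exponentiation entirely: since $p\le d^{-10}=n^{-\Theta(1)}$, Algorithm~\ref{alg:CPSLLL} terminates in $O(\log_{1/p} n)=O(1)$ resampling steps, and each step can be simulated in $O(1)$ rounds using the second clause of Definition~\ref{def:constrained} (detect satisfied events) plus a standard aggregation to find locally-minimal IDs. You need this second branch; without it the Heterogenous \MPC claim is false.

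\textbf{Secure nodes do not output their initial samples.} You write that the second bullet of Lemma~\ref{lem:insecure} ``lets every secure node finalize its output as its sampled variable values.'' That bullet says the dependent variables of secure events are fixed \emph{after} $O(\log\log_{1/p} n)$ resampling rounds, not that they are never resampled. A secure node's variables may well change during the first $(\lambda+2)\ell$ steps. The paper's Algorithm~\ref{alg:MPC} therefore has each machine \emph{simulate} those $(\lambda+2)\ell$ steps on its collected ball before outputting; you have the ball in hand, so this is easy to add, but as written your secure-node outputs would be wrong.
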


Here, `constrained' instances are those that meet modest restrictions on the size and complexity of event specifications, since the space restrictions of the models mean that we cannot allow arbitrary events. However, to our knowledge, the definition we use (Definition \ref{def:constrained}) captures all distributed and parallel applications of the LLL.

\section{Analysis}

This section contains the technical arguments comprising the proof of Lemma \ref{lem:insecure}.

\subsection{Randomness Table and Execution Log}
We consider the `randomness table' analysis perspective to keep track of the values taken by random variables during the resampling process. Imagine a table in which each variable $x \in \Vars$ has an array of the values it takes upon each resampling: that is, its first sampled value is $x_0$, if resampled it takes value $x_1$, if resampled again it takes value $x_2$, and so on. Since it is known that Algorithm \ref{alg:CPSLLL} terminates in $O(\log_{1/p} n)$ resampling steps, we will need to consider values $x_0, x_1, \dots, x_{O(\log_{1/p} n)}$. Initially, all variables $x$ take their first sampled value $x_0$, and all subsequent values are `hidden' (by which we mean that, in the analysis, we will consider their values to be unfixed, and will analyze the probabilities of events dependent on them). When the algorithm requires the dependent variables of an event to be resampled, for each such variable the next value in the table is `revealed' and taken, and in the analysis is henceforth treated as fixed. So, upon analyzing any particular point during the algorithm, a variable $x$'s current value is $x_i$ where $i$ is the number of times $x$'s dependent events have been resampled, and values $x_0, \dots, x_i$ are revealed, with the remaining values still hidden.

The execution and output of Chung, Pettie, and Su's algorithm is uniquely determined by the randomness table. In each round of Algorithm \ref{alg:CPSLLL}, events which have locally minimal IDs in the induced graph of satsified events resample their dependent variables. By definition, the set of resampling events must be independent. Based on a fixed randomness table, we can then specify the particular independent set of events that will resample in each iteration:

\begin{definition}
	The \emph{execution log} of an execution of Chung, Pettie, and Su's algorithm is a sequence $S_{1}, S_{2},\dots$ of sets of events. $S_t$ is the (independent) set of events whose dependent variables are resampled in resampling step $t$.
\end{definition}

\subsection{Witness Trees}

We next define the concept of a witness tree derived from our input LLL dependency graph $G$:

\begin{definition}
A witness tree of $G$-radius $r$ is a directed tree $T$ rooted at a node $root(T)$ such that:

\begin{itemize}
	\item Nodes in $T$ are copies of nodes $v\in V(G)$. That is, the nodes of $T$ form a \emph{multiset}, and any node of $G$ can appear multiple times in $T$.
	\item All edges in $T$ are directed towards the root.
	\item $(u,v)\in T \implies dist_{G}(u,v)\le 2$.
	\item $\forall v \in T$, $dist_G(v,root(T))\le r$. That is, the nodes of $T$ are contained within the radius-$r$ ball around $root(T)$ in $G$ (but note that the height of $T$ can be larger than $r$). We call the events at nodes $v$ with $dist_G(v, root(T))  =r$ \emph{boundary events}, and denote the set of these events $\partial(T)$.
\end{itemize}

The \emph{size} of a witness tree $T$, which we will denote $|T|$, is its number of nodes. 
\end{definition}

Note that this definition coincides with the definition of $2$-witness trees used by Chung, Pettie, and Su in the analysis of Algorithm \ref{alg:CPSLLL} \cite{CPS17}, but with the added notion of $G$-radius, which we will need to analyze the structure of the witness trees that occur during an execution.

\begin{definition}
A witness tree $T$ \emph{occurs} for a given randomness table if it can be generated from the resulting execution log by the following process:

\begin{itemize}
	\item Choose some resampling step $t$ at which some $v$ is resampled (i.e. $v\in S_t$).
	\item Set $v$ to be the root of $T$.
	\item Proceed backwards through each resampling step $t-1,\dots,1$ in the execution log. 
	\item For each resampling step $i$, for each $u\in S_i$, if there is some $w$ in the tree $T$ with $dist_{G}(u,w)\le 2$, add $u$ as a child of the \emph{deepest} such $w$ (i.e. the furthest from the root $v$ of $T$, by distance in $T$ rather than $G$), breaking ties by highest ID.
\end{itemize}

\end{definition}

Since the nodes of $T$ form a multiset, we will sometimes denote them $v_{(t)}$, meaning the copy of event $v\in \ents$ resampled at time $t$, to distinguish multiple copies of the same event $v$.

Again, this definition of the occurrence of a witness tree during an execution coincides with that of \cite{CPS17}. Since all events resampled in a step are independent, every resampling step traced backwards through the execution log increases the depth of $T$ by at most $1$. Chung, Pettie, and Su show that it increases by exactly $1$, and therefore a witness tree with a root resampled in step $t$ is of depth exactly $t-1$. This gives the following lemma:

\begin{lemma}
	If an event $v$ is sampled in resampling step $t$ of Chung, Pettie, and Su's algorithm (i.e. is in $S_t$), then $v$ is the root of an occurring witness tree of size at least $t$, depth exactly $t-1$, and $G$-radius at most $2(t-1)$.
\end{lemma}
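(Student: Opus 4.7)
The plan is to establish all three bounds simultaneously by a single induction on the backward witness-tree construction. Let $T_i$ denote the tree after the backward processing of steps $t-1, t-2, \ldots, i$, so that $T_t = \{v\}$ is the initial tree and $T_1$ is the final one. I will maintain the invariant that $T_i$ has depth exactly $t-i$, contains at least $t-i+1$ nodes, lies entirely within the $G$-ball of radius $2(t-i)$ around $v$, and has at least one leaf at depth $t-i$ that is a copy of some event in $S_i$ (with the root $v \in S_t$ playing this role at the base case $i=t$). Setting $i=1$ in the invariant yields the three bounds claimed by the lemma.

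The entire induction reduces to a single structural claim about the execution log: \emph{if $w \in S_j$ for some $j \ge 2$, then there exists $u \in S_{j-1}$ with $dist_G(u,w) \le 2$.} I prove this by case analysis on whether any dependent variable of $w$ was resampled between steps $j-1$ and $j$. If some variable of $w$ was resampled, the resampling event lies in $S_{j-1}$ and shares a variable with $w$, hence is a $G$-neighbor, so $dist_G(u,w) \le 1$. Otherwise, $w$'s assignment is identical at both steps and $w$ was therefore already satisfied at step $j-1$; moreover no $G$-neighbor of $w$ (and not $w$ itself) lies in $S_{j-1}$, since otherwise some shared variable would have been resampled. So $w$ was satisfied at step $j-1$ but $w \notin S_{j-1}$, which forces $w$ to have a satisfied $G$-neighbor $u'$ of strictly lower ID at step $j-1$. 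But $w \in S_j$ means no lower-ID $G$-neighbor of $w$ is satisfied at step $j$, so $u'$ must have become unsatisfied between the two steps, which in turn requires one of $u'$'s variables to have been resampled. The resampler is some $u \in S_{j-1}$ that is a $G$-neighbor of $u'$ (distinct from $u'$, since $u' \notin S_{j-1}$), giving $dist_G(u,w) \le dist_G(u,u') + dist_G(u',w) \le 1 + 1 = 2$.

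With the structural claim available, the inductive step is routine. The invariant for $T_{i+1}$ supplies a leaf $w$ at depth $t-i-1$ corresponding to some event in $S_{i+1}$ (or to $v \in S_t$ in the base case). Applying the claim to this event yields some $u \in S_i$ with $dist_G(u,w) \le 2$. During the backward processing of step $i$, this $u$ attaches as a child of the deepest node of $T_{i+1}$ within $G$-distance $2$; since $w$ is a candidate parent, $u$ sits at depth at least $t-i$ in $T_i$. The matching upper bound follows from the fact that events of $S_i$ attach only to pre-existing tree nodes (those in $T_{i+1}$), whose maximum depth is $t-i-1$, so nothing added at step $i$ lies at depth greater than $t-i$. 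The $G$-radius bound is preserved since each new node lies within $G$-distance $2$ of a node already within $2(t-i-1)$ of the root, and the size grows by at least one.

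The hardest part will be the second case of the structural claim, where one must carefully route the argument through a lower-ID neighbor $u'$ instead of $w$ itself: this is where the algorithm's local-minimum-ID selection rule is essential, and it is also where the witness-tree definition's $G$-distance-$2$ criterion (as opposed to $1$) becomes indispensable.
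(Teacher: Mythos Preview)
Your argument is correct. The paper itself does not prove this lemma from scratch: it cites Lemma~3.2 of \cite{CPS17} for the statement that the generated tree has depth exactly $t-1$, and then reads off size $\ge t$ and $G$-radius $\le 2(t-1)$ as immediate corollaries (size strictly exceeds depth; adjacent tree nodes are within $G$-distance~$2$). What you have done is reconstruct the CPS17 argument in a self-contained way; your structural claim---that every $w\in S_j$ has some $u\in S_{j-1}$ within $G$-distance~$2$---together with its two-case proof is exactly the heart of that cited lemma, so the route is the same, just unpacked rather than quoted. One small point worth making explicit: your depth upper bound assumes that nodes of $S_i$ attach only to $T_{i+1}$ and not to one another during the backward step; this is the intended reading of the generation process (and is how the paper justifies ``every resampling step \dots increases the depth of $T$ by at most~$1$''), but stating it removes any ambiguity.
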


\begin{proof}
	By Lemma 3.2 of \cite{CPS17}, a witness tree with a root resampled in step $t$ is of depth exactly $t-1$. Then, the lemma follows since tree size is strictly greater than depth, and $G$-radius is at most twice depth (as adjacent nodes in $T$ are within distance $2$ in $G$).
\end{proof}

Witness trees are sufficient to determine the variable values taken during the course of the algorithm, as follows: 

\begin{lemma}\label{lem:varspec}
Let $T$ be an occurring witness tree, let $v_{(t)}\in V(T)$ be an event in $T$ resampled at resampling step $t$, and let $x\in \Vars$ be a dependent variable of $v$. Let $k$ be the number of dependent events of $x$ of depth (i.e. distance in $T$ from $root(T)$) at least that of $v_{(t)}$. Then, the value taken by variable $v$ after resampling step $t$ in Algorithm \ref{alg:CPSLLL} is $x_k$.
\end{lemma}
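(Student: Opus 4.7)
The plan is to reduce the claim to a counting argument about resamplings of $x$ and then exhibit a bijection between the resampling steps and the tree nodes being counted. By the randomness-table convention, after resampling step $t$ the variable $x$ will take value $x_j$, where $j$ is the number of times $x$ is resampled during steps $1, 2, \ldots, t$. A first observation simplifies this count: any two events dependent on $x$ share $x$ and are therefore adjacent in $G$, and each $S_i$ is an independent set of events (by construction of Algorithm \ref{alg:CPSLLL}), so at most one dependent event of $x$ can lie in $S_i$. Hence $j$ equals the number of steps $i \in [1,t]$ for which $S_i$ contains a (necessarily unique) dependent event of $x$; I will call that event $u_i$, and let $d_v$ denote the depth of $v_{(t)}$ in $T$.

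Next I would show that the map $i \mapsto u_i$ is a bijection from $\{i \in [1,t] : u_i \text{ exists}\}$ onto the set of dependent events of $x$ in $T$ of depth at least $d_v$. The forward direction is direct. If $i = t$, then $v$ and $u_t$ both lie in $S_t$ and are both dependent events of $x$, so independence of $S_t$ forces $u_t = v_{(t)}$, which sits at depth $d_v$. If $i < t$, then by the time the tree-construction procedure processes $S_i$, the node $v_{(t)}$ has already been added (during the earlier processing of $S_t$). Since $dist_G(u_i, v) = 1 \leq 2$, the event $u_i$ is eligible and must be inserted into $T$ as a child of the \emph{deepest} node within $G$-distance $2$ of it, which has depth at least $d_v$ because $v_{(t)}$ itself is such a candidate; hence $u_i$ lands at depth at least $d_v + 1$.

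The backward direction is the crux, and relies on the same deepest-parent tie-breaking rule in reverse. Suppose $u' \in V(T)$ is a dependent event of $x$ at depth at least $d_v$, and let $i'$ be the step during whose processing $u'$ was inserted. If $i' > t$, then at the later moment when the procedure processes $S_t$ and inserts $v_{(t)}$, $u'$ is already in $T$ and is within $G$-distance $1$ of $v$. Hence $u'$ is an eligible candidate parent for $v_{(t)}$, and so the parent actually chosen must have depth at least $\text{depth}(u') \geq d_v$, forcing $\text{depth}(v_{(t)}) \geq d_v + 1$, contradicting $\text{depth}(v_{(t)}) = d_v$. Therefore $i' \leq t$, and by uniqueness of the dependent event of $x$ in $S_{i'}$ we conclude $u' = u_{i'}$, completing the bijection.

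I expect the backward direction to be the main obstacle: the step is to recognize that the deepest-parent rule, read backwards, forbids any $G$-neighbor of $v$ that was added to $T$ strictly before $v_{(t)}$ from occupying depth $\geq d_v$. Once this is seen, everything else is bookkeeping; the bijection yields $j = k$, and therefore the value of $x$ after step $t$ is $x_k$.
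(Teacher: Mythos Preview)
Your proof is correct and follows the same idea as the paper's: identify the nodes of $T$ that are dependent events of $x$ at depth at least that of $v_{(t)}$ with the resamplings of $x$ during steps $1,\dots,t$. The paper's proof is much terser---it simply observes that any dependent event of $x$ resampled before step $t$ is adjacent to $v$ and hence is added to $T$ at greater depth, and counts these together with $v_{(t)}$ itself. Your version is more careful in that you explicitly verify the \emph{backward} direction (no dependent event of $x$ resampled at a step $i'>t$ can appear in $T$ at depth $\ge d_v$, by the deepest-parent rule), which the paper leaves implicit but is genuinely needed for the count to be an equality rather than just a lower bound.
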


\begin{proof}
Any dependent event of $x$ that is resampled prior to resampling step $t$ will be present in $T$, since it is adjacent to $v$ and would therefore be added during the generation process, at a greater depth than $v_{(t)}$. Each of these resampled events, as well as $v_{(t)}$ itself, causes $x$ to take the next value in the randomness table, starting at $x_0$. So, after resampling step $t$, $x$ will take value $x_k$.
\end{proof}

\subsection{Possible Witness Trees}

In our first departure from the analysis of \cite{CPS17}, we next define a weaker notion of witness trees $T$ that are \emph{possible} in an execution. These are witness trees that are created from an execution log in the same manner as occurring witness trees, except that we ignore all nodes of distance more than $R$ from the root in $G$. Intuitively, the idea here is that if we only look at a radius of $R$ from this root, these trees are those that \emph{could} be occurring, depending on what happened outside this radius earlier in the execution log. 

\begin{definition}
	A witness tree $T$ is $R$-\emph{possible} in an execution if it can be generated from the execution log by the following process:
	
	\begin{itemize}
		\item Choose some resampling step $t$ at which some $v$ is resampled (i.e. $v\in S_t$).
		\item Set $v$ to be the root of $T$.
		\item Proceed backwards through each resampling step $t-1,\dots,1$ in the execution log. 
		\item For each resampling step $i$, for each $u\in S_i$ with $dist_{G}(u,v)\le R$, if there is some $w$ in the tree $T$ with $dist_{G}(u,w)\le 2$, add $u$ as a child of the \emph{deepest} such $w$ (i.e. the furthest from the root $v$ of $T$, by distance in $T$ rather than $G$), breaking ties by highest ID.
	\end{itemize}
	
\end{definition}

Clearly, if a witness tree $T$ of $G$-radius $r$ is $R$-possible, then $R\ge r$ (since otherwise the boundary events of $T$ would not have been added during the generation process). Furthermore, if $R>r$ then $T$ occurs (since no nodes $u$ with $dist_G(u,root(T))=R$ were added during generation, and therefore the same tree would have resulted from generation of an occurring witness tree). So, the notion of $R$-possible witness trees captures both occurring trees of $G$-radius less than $R$, and possible witness trees that could be part of occurring trees of $G$-radius at least $R$.

Possible witness trees similarly specify the values taken by variables during the course of the algorithm, except those at distance $R$ from the root:

\begin{lemma}\label{lem:varno}
	Let $T$ be an $R$-possible witness tree, let $v_{(t)}\in V(T)$ be an event in $T$ resampled at resampling step $t$, with $dist_G(v,root(T))<R$, and let $x\in \Vars$ be a dependent variable of $v$. Let $k$ be the number of dependent events of $x$ of depth (i.e. distance in $T$ from $root(T)$) at least that of $v_{(t)}$. Then, the value taken by variable $x$ after resampling step $t$ in Algorithm \ref{alg:CPSLLL} is $x_k$.
\end{lemma}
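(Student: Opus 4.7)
The plan is to mimic the proof of Lemma \ref{lem:varspec}, with one additional ingredient: a distance check confirming that every dependent event of $x$ resampled at or before step $t$ survives the $R$-truncation imposed by the $R$-possible generation procedure. Once this check is in place, the value taken by $x$ after step $t$ is determined entirely by the count $k$ of such events.

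First I would use the hypothesis $dist_G(v, root(T)) < R$, i.e.\ $dist_G(v, root(T)) \le R - 1$. Any dependent event $u$ of $x$ either coincides with $v$ or is adjacent to $v$ in $G$ (since $u$ and $v$ share the underlying variable $x$), so $dist_G(u, v) \le 1$. By the triangle inequality,
\[
dist_G(u, root(T)) \;\le\; dist_G(u, v) + dist_G(v, root(T)) \;\le\; 1 + (R-1) \;=\; R,
\]
so $u$ meets the eligibility condition $dist_G(u, root(T)) \le R$ in the $R$-possible generation procedure whenever it lies in some $S_i$.

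Next I would verify that any such $u$ resampled at some step $i \le t$ is actually inserted into $T$, at depth at least that of $v_{(t)}$ (with $v_{(t)}$ itself accounting for the copy at $v$'s depth). Since $u$ is adjacent to $v_{(t)} \in V(T)$ in $G$, we have $dist_G(u, v_{(t)}) \le 2$, so the generation rule attaches $u$ as a child of some $w \in V(T)$ with $dist_G(u,w)\le 2$ and depth at least that of $v_{(t)}$. Hence every resampling of $x$ occurring in steps $1, \dots, t$ corresponds to a distinct node of $T$ of depth at least that of $v_{(t)}$, and conversely every such node was resampled at some step in $\{1,\dots,t\}$. The count $k$ therefore equals the exact number of times $x$ has been advanced through the randomness table by the end of step $t$; starting from $x_0$ and advancing $k$ times yields $x = x_k$.

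The only substantive obstacle is the distance bookkeeping in the second paragraph, which must use the strict inequality $dist_G(v, root(T)) < R$ (rather than $\le R$) in order to guarantee that every dependent event of $x$ remains inside the $R$-ball around $root(T)$ and is thus not discarded by the $R$-truncation. Once that check is in hand, the remainder of the argument is essentially a transcription of the proof of Lemma \ref{lem:varspec}, and makes no further use of events at $G$-distance exactly $R$ from the root.
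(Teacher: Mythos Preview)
Your proposal is correct and follows essentially the same approach as the paper's proof. The paper's argument is a one-line reference back to Lemma~\ref{lem:varspec}, leaving implicit the triangle-inequality check (that any event sharing variable $x$ with $v$ lies within $G$-distance $R$ of the root and hence survives the $R$-truncation); you have simply made that step explicit, which is an improvement in clarity rather than a different method.
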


\begin{proof}
	As for Lemma \ref{lem:varspec}; any resampled event that would affect the variable values of $v$ at step $t$ is in $T$, at a greater depth than $v_{(t)}$. Each of these resampled events, as well as $v_{(t)}$ itself, causes $x$ to take the next value in the randomness table, starting at $x_0$. So, after resampling step $t$, $x$ will take value $x_k$.
\end{proof}

We can bound the probability that any particular witness tree occurs, or is possible.

\begin{lemma}\label{lem:treeprob}
The probability that any particular witness tree $T$ \emph{occurs} in an execution is at most $p^{|T|}$. The probability that any particular witness tree $T$ is $R$-\emph{possible} in an execution is at most $p^{|T|-|\partial(T)|}$. These probability bounds hold even if the sampled values of variables independent of the events in $T$ are chosen adversarially.
\end{lemma}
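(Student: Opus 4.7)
The approach I would take is to adapt the standard randomness-table (principle-of-deferred-decisions) argument to both the occurring and the $R$-possible settings. The key is to pinpoint, for each node $v_{(t)}\in T$ that must ``match'' (every node when bounding the occurrence probability, every non-boundary node when bounding the $R$-possibility probability), a specific set of cells of the randomness table whose values must simultaneously satisfy $v$, and then to show that these cell sets are pairwise disjoint.

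First, by Lemma \ref{lem:varno} (which subsumes Lemma \ref{lem:varspec} when $R$ exceeds the $G$-radius of $T$), for each non-boundary node $v_{(t)}\in T$ and each dependent variable $x$ of $v$, the value of $x$ held just before step $t$ of the algorithm is $x_{k_{v,x}-1}$, where $k_{v,x}$ is the number of dependent events of $x$ that appear in $T$ at depth at least $\mathrm{depth}_T(v_{(t)})$. Hence, for $T$ to occur (resp.\ be $R$-possible), the values stored in the cells
\[C_v\;:=\;\{(x,k_{v,x}-1):x\text{ is a dependent variable of }v\}\]
must jointly make the event $v$ satisfied, which happens with probability at most $p$ over an independent sample of those cells.

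The core combinatorial claim is that the sets $C_v$ are pairwise disjoint across the relevant nodes of $T$. If two distinct tree nodes $v_{(t)},w_{(s)}$ both list $x$ as a dependent variable, then $v$ and $w$ are within $G$-distance at most $2$. Tracing the tree construction rule backwards through the execution log, whichever of the two is added later, say $v_{(t)}$ (so $t<s$), becomes a child of the deepest existing tree node within $G$-distance $2$; since $w_{(s)}$ is one candidate, the chosen parent has depth at least $\mathrm{depth}_T(w_{(s)})$, and hence $\mathrm{depth}_T(v_{(t)})>\mathrm{depth}_T(w_{(s)})$. This forces $k_{v,x}<k_{w,x}$ (since $w_{(s)}$ lies in the counting set for $k_{w,x}$ but not for $k_{v,x}$), so the cells $(x,k_{v,x}-1)$ and $(x,k_{w,x}-1)$ differ.

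Given disjointness, the independence of the randomness-table entries implies that the events ``$C_v$ satisfies $v$'' are mutually independent across the relevant nodes, each of probability at most $p$; multiplying gives $p^{|T|}$ in the occurring case and $p^{|T|-|\partial(T)|}$ in the $R$-possible case, since boundary nodes supply no factor (Lemma \ref{lem:varno} does not pin down their triggering cells). Any variable that no event in $T$ depends on contributes no cell to any $C_v$ and cannot influence the satisfaction of any $v\in T$, so its sampled values may be fixed adversarially in advance without changing the bound. I expect the disjointness claim to be the main obstacle: the whole argument hinges on the ``deepest existing parent'' rule being precisely what guarantees that shared-variable node pairs end up at distinct tree depths, and this must be unpacked carefully from the construction.
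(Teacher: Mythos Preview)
Your proposal is correct and follows essentially the same approach as the paper's proof: both identify, via Lemma~\ref{lem:varno}, the specific randomness-table cells that must satisfy each (non-boundary) tree node, argue these cell sets are pairwise disjoint because shared-variable nodes are forced to distinct depths by the construction rule, and then multiply the independent $p$ factors. Your write-up is in fact more explicit than the paper's in justifying the depth separation via the ``deepest existing parent'' rule; the paper asserts the depth inequality more tersely and defers to \cite{CPS17}. One minor imprecision: two events sharing a dependent variable are at $G$-distance at most $1$, not $2$---but since the tree rule only needs distance $\le 2$, this does not affect your argument.
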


\begin{proof}
The argument for the occurrence probability is well-known and used in several LLL algorithms \cite{MT10,CPS17}, but we summarize it here: for a witness tree to occur, for any node $v_{(t)} \in V(T)$, $v$ must have been satisfied under the variable assignments at step $t$, or it would not have been reverted. Furthermore, the events that each such bad event in $V(T)$ is satisfied are independent, because they are dependent on disjoint entries in the randomness table. To see this, consider two such nodes $v_{(t)}, u_{(t')}$. If $dist_G(u,v)>1$, then we clearly have independence, since $u$ and $v$ share no dependent variables. So, assume $dist_G(u,v)\le 1$. Now, we must have $t \ne t'$, because the set of nodes resampled in any step is independent. Assume W.L.O.G that $t < t'$. Then, $v_{(t)}$ must be deeper in $T$ than $u_{(t')}$. Therefore, any shared variable of $u$ and $v$ will take a different value from the randomness table at $v_{(t)}$ than it does at $u_{(t')}$, by Lemma \ref{lem:varno}. The entries of the randomness table are sampled independently at random, and so, by independence, we have $\Prob{\text{$T$ occurs}} \le p^{|T|}$.

To bound the probability that $T$ is $R$-\emph{possible}, notice that the same argument is true for all of the nodes that are within $G$-radius $R-1$ of $root(T)$, since all of the events with which they share random variables are present in the tree, and the entries in the randomness table that cause each event to be satisfied at the step it is resampled is again uniquely defined by the tree, and disjoint. It is only the nodes $v$ with $dist_G(v, root(T))=R$ for which this is not the case, since there could be adjacent nodes $u$ with $dist_G(u, root(T))=R+1$ that were resampled in the execution log but were not added to $T$, and so the entries in the randomness table leading to $v$ being resampled are not uniquely determined by $T$. Therefore, using independence only over the events that non-boundary nodes are satisfied upon resampling, $\Prob{\text{$T$ is possible}} \le p^{|T|-|\partial(T)|}$.
\end{proof}

\subsection{Bounding the Size of Occurring Witness Trees}
Next, we wish to show that large witness trees are unlikely to occur. The next two lemmas are fairly standard, and similar arguments are the source of the bound on resampling steps in the algorithms of Moser and Tardos \cite{MT10} and Chung, Pettie, and Su \cite{CPS17}.

First, a simple upper bound on the number of witness trees of a certain size:

\begin{lemma}\label{lem:treenumber}
	The number of witness trees of size at most $k$ rooted at a node $v$ in a graph of maximum degree $d$ is less than $(5d^2)^k$.
\end{lemma}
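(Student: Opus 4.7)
My plan is to use a standard depth-first traversal (DFS) encoding of rooted trees. For each $k' \in \{1,\dots,k\}$, I would count the witness trees of size exactly $k'$ rooted at $v$ and then sum. Given such a tree $T$, I would overcount by imposing an arbitrary ordering on siblings and enumerating rooted \emph{ordered} labelled trees instead (the unordered count is bounded above by the ordered count, even allowing for the multiset structure of $T$'s labels). A DFS traversal of a rooted ordered tree with $k'$ nodes produces exactly $k'-1$ ``down'' moves (each introducing a child) and $k'-1$ ``up'' moves (each backtracking), which together form a Dyck word of length $2(k'-1)$. The number of such Dyck words is the Catalan number $C_{k'-1} \le \binom{2(k'-1)}{k'-1} \le 4^{k'-1}$.

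Next I would show that the label of each newly introduced child must lie in the closed radius-$2$ ball in $G$ around its parent's label, which has size at most $1 + d + d(d-1) = d^2 + 1$ (the root's label is fixed as $v$, so no choice is made there; note that a child's label may coincide with its parent's since the tree is multiset-labelled and the edge constraint $dist_G(u,w) \le 2$ is still satisfied). Combining these selections with the Dyck-word enumeration, the number of witness trees of size exactly $k'$ rooted at $v$ is at most $\bigl(4(d^2+1)\bigr)^{k'-1}$, which is at most $(5d^2)^{k'-1}$ whenever $d \ge 2$ (since $4(d^2+1) \le 5d^2 \iff d^2 \ge 4$, and we may assume $d \ge 2$ as the paper's standing assumption is $d = \omega(1)$).

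Finally I would conclude by summing the geometric series:
\[
\sum_{k'=1}^{k} (5d^2)^{k'-1} \;\le\; \frac{(5d^2)^k}{5d^2-1} \;<\; (5d^2)^k.
\]
No step presents a real obstacle; the argument is a routine counting exercise in the spirit of the Moser--Tardos witness-tree bound. The only care needed is in the arithmetic, where one must squeeze the Catalan constant $4$ together with the closed-ball bound $d^2+1$ inside the target constant $5d^2$, and in verifying that treating the unordered multiset-labelled tree as an ordered labelled tree is a legitimate overcount.
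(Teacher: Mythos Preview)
Your proposal is correct and follows essentially the same approach as the paper: a DFS encoding of rooted ordered trees, bounding the shape by $\binom{2(k-1)}{k-1} < 4^{k-1}$ and the label choices at each exploration step by $d^2+1$. The only cosmetic difference is how ``size at most $k$'' is handled: the paper uses a single DFS of fixed length $2(k-1)$ that may terminate early upon backtracking from the root, whereas you sum a geometric series over exact sizes $k' \le k$; both yield the same bound $(5d^2)^k$.
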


\begin{proof}
Imagine the process of constructing a witness tree $T$ as the DFS tree produced by performing a depth-first search in $G$ (in which nodes are allowed to be explored multiple times, and are thus added as copies to $T$). We bound the number of such trees by counting the choices made when performing the depth-first search through $G$ (starting at $v$). The DFS will perform $2(k-1)$ steps, of which $k-1$ will explore a new node (potentially a copy of a node already in $T$, and $k-1$ will backtrack. If the DFS attempts to backtrack from (the original copy of) the root node $v$, we will consider this to terminate the process and return the current tree $T$ (which will be of size less than $k$) - in this way, the DFS process will be able to return all witness trees of size at most $k$, not just exactly $k$. There are at most $\binom{2(k-1)}{k-1}< 4^{k-1}$ possible sequences of exploration and backtracking steps. For each of these sequences, every exploration step has a choice of $d^2+1$ targets (all nodes within distance $2$ of the current node in $G$, including a new copy of the current node itself), while backtracking steps do not involve any further choices. So, the total number of possible DFS executions, and therefore witness trees of size at most $k$, is less than $(4d^2+4)^{k-1} \le (5d^2)^k$.
\end{proof}

We can now show an upper bound on the size of witness trees that can occur.

\begin{lemma}\label{lem:nobigtrees}
When a randomness table is sampled from its distribution, with high probability, no witness tree of size at least  $5 \log_{1/p} n $ occurs (denote this event by $E_{good}$).
\end{lemma}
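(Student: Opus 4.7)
The plan is a direct union bound over witness trees, combining the tree count from Lemma~\ref{lem:treenumber} with the single-tree occurrence probability from Lemma~\ref{lem:treeprob}. Setting $k_0 := 5\log_{1/p} n$, for each of the $n$ possible roots $v\in V(G)$ there are at most $(5d^2)^k$ witness trees of size at most $k$ rooted at $v$, and each such tree of size $k$ occurs with probability at most $p^k$. A union bound over roots and over sizes $k \ge k_0$ then yields
$$\Prob{\overline{E_{good}}} \;\le\; n \sum_{k \ge k_0} (5d^2)^k\, p^k \;=\; n \sum_{k \ge k_0} (5d^2 p)^k.$$

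To show this sum is polynomially small, I would invoke the polynomial criterion $p \le d^{-(10+\Omega(1))}$, which gives $5d^2 p \le 5\, d^{-8-\Omega(1)} = o(1)$ since $d = \omega(1)$. The geometric series then collapses, up to a factor of $2$, into its leading term $n(5d^2p)^{k_0}$. Estimating each factor: $p^{k_0} = n^{-5}$; the criterion yields $d^{2k_0} = d^{10 \log_{1/p} n} \le n^{10/(10+\Omega(1))} = n^{1-\Omega(1)}$; and $5^{k_0} = n^{o(1)}$, because $k_0 = o(\log n)$ (which in turn follows from $\log(1/p) = \omega(1)$, a consequence of $d = \omega(1)$). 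Multiplying,
$$n(5d^2p)^{k_0} \;\le\; n \cdot 5^{k_0} \cdot d^{2k_0} \cdot p^{k_0} \;\le\; n^{1 + o(1) + (1-\Omega(1)) - 5} \;=\; n^{-3-\Omega(1)+o(1)},$$
which is polynomially small in $n$, establishing $E_{good}$ w.h.p.

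The whole argument is essentially routine once Lemmas~\ref{lem:treenumber} and~\ref{lem:treeprob} are in hand. The only point worth verifying is that the polynomial-criterion exponent is large enough to overpower both the $(5d^2)^k$ blowup from the tree count and the $5^{k_0}$ constant-base factor; the criterion $p = d^{-(10+\Omega(1))}$ does so very comfortably, leaving substantial slack that is presumably reserved for the subsequent narrow-witness-tree analysis. Accordingly, I do not anticipate any serious obstacle in this particular proof.
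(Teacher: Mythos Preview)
Your proposal is correct and follows essentially the same approach as the paper: a union bound over roots and sizes, combining Lemma~\ref{lem:treenumber} with Lemma~\ref{lem:treeprob}. The only cosmetic difference is that the paper simplifies the sum in one step via $5d^2 p \le p^{1/2}$ (using $p \le (5d^2)^{-2}$, which follows from $p = d^{-(10+\Omega(1))}$), obtaining $\sum_{k\ge k_0} n p^{k/2} \le 2n p^{k_0/2} = o(n^{-1})$, whereas you bound the factors $p^{k_0}$, $d^{2k_0}$, and $5^{k_0}$ separately; both routes reach the same conclusion.
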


\begin{proof}
By Lemma \ref{lem:treenumber}, the number of witness trees of size $k$ rooted at a particular node $v$ is less than $(5d^2)^k$. So, the total number of trees witness trees of size $k$ is less than $n(5d^2)^k$.

By Lemma \ref{lem:treeprob}, each such tree of size $k$ occurs with probability at most $p^{k}$. So, by a union bound, the probability that any witness tree of size at least $k$ occurs is bounded by:
\begin{align*}
\sum_{k=5 \log_{1/p} n}^\infty n(5d^2)^k \cdot p^{k} &\le  \sum_{k=5 \log_{1/p} n}^\infty n p^{k/2} \le 2np^{\frac 52 \log_{1/p} n}
=o(n^{-1}).
\end{align*} 

Here we use that $p\le (5d^2)^{-2}$.

\end{proof}

\subsection{Narrow Witness Trees and Risky Nodes}
Now we come to one of our most important technical definitions: narrow witness trees. Informally, these are witness trees that do not expand at their boundary:

\begin{definition}
	A witness tree $T$ of size $k$ is $\eps$-narrow if at most $\eps k$ of its nodes are boundary events.
	
\end{definition}

The definition of a node being \emph{insecure} (as in our main technical result Lemma \ref{lem:insecure}) is based on whether it is adjacent to the root of any possible narrow witness tree of a certain size:

\begin{definition}
Let $\eps\in (0,0.1)$ be a constant such that $d^{10} \le p^{2\eps-1}$ (which exists since we assume $p=d^{-(10+\Omega(1))}$), and set $\lambda = 2/\eps$. For brevity, let $\ell$ denote $\log_{\frac{1}{1-\eps}}\log_{1/p} n$.

Under a fixed randomness table, a node $v$ is \emph{risky} if it is the root of an $R$-\emph{possible} $\eps$-narrow witness tree of size at least $\lambda\ell$, for some $R\le 2(\lambda+2)\ell$.

A node $u$ is \emph{insecure} if it is either \emph{risky} or is adjacent to a risky node.
\end{definition}

We can show that, if a node is to be resampled later than some resampling step $\Omega(\log\log_{1/p} n)$ in Algorithm \ref{alg:CPSLLL}, then it must be risky:

\begin{lemma}\label{lem:mustberisky}
	For any fixed randomness table, if the event $E_{good}$ occurs (i.e. no witness tree of size larger than $5\log_{1/p} n$ occurs), and some node $v$ is in $S_t$ for $t\ge (\lambda+2)\ell$, then $v$ is risky.
\end{lemma}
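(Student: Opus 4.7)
The plan is to exhibit, for some $R \le 2(\lambda+2)\ell$, an $R$-possible witness tree rooted at $v$ that is $\eps$-narrow and has size at least $\lambda\ell$. Let $T$ be the occurring witness tree rooted at $v$ produced by $v$'s resampling at step $t$; by the preceding lemma, $T$ has tree-depth exactly $t-1 \ge (\lambda+2)\ell - 1$ and size satisfying $t \le |T| \le 5\log_{1/p} n$ under $E_{good}$. Write $T^{(R)}$ for the $R$-possible witness tree generated from the same root $v$ and step $t$. An easy induction on the reverse-time generation process yields the monotonicity $T^{(R-1)} \subseteq T^{(R)}$ as event multisets: any event added under radius bound $R-1$ has its chosen parent (at $G$-distance at most $R-1$) already present in $T^{(R)}$ by the inductive hypothesis, and is therefore added under the looser bound as well. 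Since all events of $T^{(R-1)}$ lie at $G$-distance at most $R-1$, they sit inside the interior $T^{(R)} \setminus \partial(T^{(R)})$, where the boundary $\partial(T^{(R)})$ of an $R$-possible tree consists of its events at $G$-distance exactly $R$.

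I first establish $|T^{(R)}| \ge \lambda\ell + 1$ for every $R \ge 2\lambda\ell$ via a path argument. Fix any root-to-descendant path $v = v_0, v_1, \dots, v_{\lambda\ell}$ in $T$ of length $\lambda\ell$, which exists since $T$ has tree-depth greater than $\lambda\ell$. Each tree-edge spans at most $2$ in $G$, so $dist_G(v_j, v) \le 2j \le R$; and since depth in $T$ matches reverse resampling step exactly (Lemma 3.2 of \cite{CPS17}), the $v_j$ are resampled at distinct steps $t, t-1, \dots, t-\lambda\ell$, so $v_{j-1}$ has already been added to $T^{(R)}$ by the time $v_j$ is considered. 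As $dist_G(v_j, v_{j-1}) \le 2$, the event $v_j$ attaches to $v_{j-1}$ and is added to $T^{(R)}$, supplying $\lambda\ell + 1$ distinct multiset copies there.

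Now suppose for contradiction that no $R \in [2\lambda\ell + 1, 2(\lambda+2)\ell]$ makes $T^{(R)}$ $\eps$-narrow. Then $|\partial(T^{(R)})| > \eps|T^{(R)}|$ for every such $R$, and the inclusion $T^{(R-1)} \subseteq T^{(R)} \setminus \partial(T^{(R)})$ gives $|T^{(R-1)}| < (1-\eps)|T^{(R)}|$. Iterating this $4\ell$ times from $R = 2(\lambda+2)\ell$ down to $R = 2\lambda\ell + 1$ produces
\[
|T^{(2\lambda\ell)}| < (1-\eps)^{4\ell}\,|T^{(2(\lambda+2)\ell)}| \le (1-\eps)^{4\ell}\cdot 5\log_{1/p} n = \frac{5}{(\log_{1/p} n)^{3}},
\]
using the defining identity $(1-\eps)^\ell = 1/\log_{1/p} n$. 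Whenever $\log_{1/p} n \ge 2$ the right-hand side is strictly less than $1$, contradicting the path-based lower bound $|T^{(2\lambda\ell)}| \ge \lambda\ell + 1$. Therefore some $R$ in the range makes $T^{(R)}$ an $\eps$-narrow $R$-possible witness tree of size at least $\lambda\ell$, certifying that $v$ is risky. The main subtlety is setting up the boundary convention correctly---for an $R$-possible tree one must take the $G$-radius parameter to be $R$ itself, so that a tree whose events all lie strictly inside radius $R$ is automatically $\eps$-narrow---and verifying the monotonicity inclusion; once these are in place, the path lower bound and the geometric shrinkage force the contradiction inside the allotted radius budget of $4\ell$.
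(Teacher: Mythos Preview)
Your proof is correct and follows essentially the same approach as the paper: both consider the nested sequence of $R$-possible trees $T^{(R)}$, use the multiset inclusion $T^{(R-1)}\subseteq T^{(R)}\setminus\partial(T^{(R)})$, and derive a contradiction with $E_{good}$ from the assumption that none in a suitable radius range are $\eps$-narrow. The only differences are cosmetic---you iterate the geometric inequality downward (shrinking from the upper bound $5\log_{1/p} n$) rather than upward, you work over $R\in[2\lambda\ell+1,\,2(\lambda+2)\ell]$ instead of $[\lambda\ell,\,(\lambda+2)\ell]$, and you supply an explicit path argument to certify the size condition $|T^{(R)}|\ge\lambda\ell$ that the paper simply asserts.
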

\begin{proof}
	 Assume that a node $v$ is resampled in step $t$ of the execution log. For any $R$, $T_R$ be the $R$-possible witness tree, rooted at $v$, that is generated from the execution log starting at step $t$. We argue that at least one of the trees
	\[T_{\lambda\ell},T_{\lambda\ell+1},  T_{\lambda\ell+2}, \dots T_{(\lambda+2)\ell}\]
	must be $\eps$-narrow. Assume for the sake of contradiction that this is not the case. Then, we prove by induction that for $R\le (\lambda+2)\ell$, $|T_{R}|\ge (1-\eps)^{\lambda\ell-R}$.
	
	This claim is trivially true for $R\le \lambda\ell$. For the inductive step, assume the claim holds for some $R\in  [\lambda\ell, (\lambda+2)\ell)$ and consider $T_{R+1}$. The node multiset of $T_{R+1}$ must be a supermultiset of $T_{R}$ (since any node added to $T_{R}$ during the generation process would also have been added to $T_{R+1}$). By assumption, $T_{R+1}$ is not $\eps$-narrow, and therefore at least $\eps|T_{R+1}|$ of its nodes are boundary nodes. These nodes are not in $T_{R}$, since they are at distance $R+1$ (in $G$) from the common root. So, $(1-\eps)|T_{R+1}|\ge |T_{R}|$, and, by the inductive assumption,
	\[|T_{R+1}|\ge\frac{1}{1-\eps}|T_{R}| \ge \frac{1}{1-\eps} \cdot (1-\eps)^{\lambda\ell-R} \ge (1-\eps)^{\lambda\ell-(R+1)}\enspace.\]

	Now, since the event $E_{good}$ holds, no witness tree of size more than $5\log_{1/p} n$ occurs, and therefore no witness tree of size more than $5\log_{1/p} n$ is possible (since the node multiset of any possible tree is a submultiset of that of the occurring tree generated by not restricting the $G$-radius). But, we have shown that the possible tree $T_{(\lambda+2)\ell}$ must have size at least 
	\[(1-\eps)^{\lambda\ell-(\lambda+2)\ell} = (1-\eps)^{-2\ell} = \log^2_{1/p} n > 5\log_{1/p} n\enspace.\]
	
	This gives a contradiction, and therefore we conclude that at least one of $T_{\lambda\ell}\dots T_{(\lambda+2)\ell}$ must be $\eps$-narrow. This is a $2(\lambda+2)\log_{\frac{1}{1-\eps}}\log_{1/p} n$-possible $\eps$-narrow witness tree of size at least $\lambda\ell$, rooted at $v$, and so $v$ is risky.
\end{proof}

Next, we can show that narrow witness trees are unlikely to be possible, and so nodes are unlikely to be the root of any:

\begin{lemma}\label{lem:possibleprob}
	When a randomness table is sampled from its distribution, the probability that any particular node $v$ is the root of any $R$-possible $\eps$-narrow witness tree $T$ of size at least $k$ is at most $2  (5d^2p^{1-\eps})^k$, dependent on the randomness only of variables within $B_{R+1}(v)$.
\end{lemma}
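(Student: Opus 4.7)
The plan is a straightforward union bound over all $R$-possible $\varepsilon$-narrow witness trees of size at least $k$ rooted at $v$, using Lemma~\ref{lem:treenumber} to count such trees and Lemma~\ref{lem:treeprob} to bound the probability of each one individually.

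First I would fix a size $k' \ge k$ and consider a particular witness tree $T$ of size $k'$ rooted at $v$ that is $\varepsilon$-narrow, meaning $|\partial(T)| \le \varepsilon k'$. By Lemma~\ref{lem:treeprob}, the probability that $T$ is $R$-possible is at most $p^{|T|-|\partial(T)|}$, and since $p<1$ and $|T|-|\partial(T)| \ge (1-\varepsilon)k'$, this is at most $p^{(1-\varepsilon)k'}$. By Lemma~\ref{lem:treenumber}, the number of witness trees rooted at $v$ of size exactly $k'$ is at most $(5d^2)^{k'}$ (using that the count of trees of size at most $k'$ upper-bounds the count of trees of size exactly $k'$). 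Combining these two bounds and summing over $k' \ge k$ gives
\[
\Prob{v \text{ is root of some such tree}} \;\le\; \sum_{k'\ge k} (5d^2)^{k'} \cdot p^{(1-\varepsilon)k'} \;=\; \sum_{k'\ge k} \bigl(5d^2 p^{1-\varepsilon}\bigr)^{k'}.
\]

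Next I would show that $5d^2 p^{1-\varepsilon} \le 1/2$, so that the geometric series is bounded by $2(5d^2 p^{1-\varepsilon})^k$, yielding the claimed bound. For this, I use the choice of $\varepsilon$: the hypothesis $d^{10} \le p^{2\varepsilon-1}$ rearranges to $p^{1-2\varepsilon} \le d^{-10}$, and since $p<1$ we get $p^{1-\varepsilon} = p^{1-2\varepsilon} \cdot p^{\varepsilon} \le d^{-10}$. Hence $5d^2 p^{1-\varepsilon} \le 5d^{-8}$, which is $\le 1/2$ for $d$ sufficiently large (recall $d=\omega(1)$).

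Finally, for the locality claim about randomness, I would invoke the adversarial-variables clause of Lemma~\ref{lem:treeprob}: the probability bound for each $R$-possible $T$ holds even when variables independent of events in $T$ are chosen adversarially. Every event of $T$ lies within $G$-distance $R$ of $v$, and every variable on which such an event depends is shared only with events at $G$-distance $\le 1$ from it, hence at $G$-distance $\le R+1$ from $v$. So the entire union bound depends only on the randomness of variables in $B_{R+1}(v)$. I do not anticipate a real obstacle here; the main content is verifying that the constant $\varepsilon$ was chosen precisely so that the geometric series is summable, and the locality statement is an immediate corollary of the adversarial-variables clause already proven in Lemma~\ref{lem:treeprob}.
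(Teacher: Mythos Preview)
Your proof is correct and follows essentially the same approach as the paper: a union bound over tree sizes $k' \ge k$, using Lemma~\ref{lem:treenumber} for the count and Lemma~\ref{lem:treeprob} for the per-tree probability, then summing the resulting geometric series. You supply more explicit justification than the paper does for why $5d^2 p^{1-\varepsilon} \le 1/2$ (via the defining inequality $d^{10}\le p^{2\varepsilon-1}$) and for the locality claim, but the structure is identical.
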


\begin{proof}
By Lemma \ref{lem:treeprob}, the probability that any particular $\eps$-narrow witness tree $T$ of size at least $k$ rooted at $v$ is $R$-possible is at most $p^{(1-\eps) k}$, independently of randomness outside $B_{R+1}(v)$. By Lemma \ref{lem:treenumber} there are at most $(5d^2)^k$ trees of size $k$ rooted at $v$. So, by a union bound, the probability that $v$ is the root of any $R$-possible $\eps$-narrow witness tree $T$ of size at least $k$ is at most 

\[\sum\limits_{j=k}^\infty  p^{(1-\eps) j}(5d^2)^j \le 2 \cdot (5d^2p^{1-\eps})^k \]
	
Variables outside $B_{R+1}(v)$ are not dependent variables for any of the events contained in witness trees of $G$-radius at most $R$, so the probability holds regardless of the values of these variables.
\end{proof}

\subsection{Properties of the Graph of Insecure Nodes}

Now that we have a bound on the probability of possible narrow witness trees, we can show that nodes are unlikely to be risky, and demonstrate the properties stated in Lemma \ref{lem:insecure}:

\begin{lemma}\label{lem:riskyproperties}
Let $G'$ denote the induced graph of insecure nodes. $G'$ has the following properties, with high probability:

\begin{itemize}
	\item $G'$ is of size $np^{\Omega(\log\log_{1/p} n)}$.
	\item The connected components of $G'$ are of size $d^{O(\log\log_{1/p} n)}$ and diameter $O(\log_{1/p} n )$ (distances with respect to $G$).
	\item $G'$ admits an $(O(\log \log_{1/p} n ), O(\log^2 \log_{1/p} n ))$ network decomposition (distances again with respect to $G$).
\end{itemize} 

\end{lemma}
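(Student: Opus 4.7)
The plan is to reduce every bulleted claim to a single-node probability estimate for the event ``$v$ is risky'' and then apply standard shattering-style union-bound arguments, much in the spirit of Beck's original LLL algorithm and the shattering analysis of \cite{Davies23a}. For the single-node bound, I combine Lemma~\ref{lem:possibleprob} with a union bound over the $O(\ell)$ eligible choices of $R$ in the definition of risky, and use the hypothesis $d^{10}\le p^{2\eps-1}$ to simplify $5d^2p^{1-\eps}\le p^{\Omega(1)}$. The resulting bound is $\Prob{v\text{ risky}}\le q:=p^{\Omega(\log\log_{1/p}n)}$, and since insecure means risky or adjacent to risky, $\Prob{v\text{ insecure}}\le (d+1)q$. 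Linearity of expectation then gives $\Exp{|V(G')|},\Exp{|E(G')|}\le np^{\Omega(\log\log_{1/p}n)}$.

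To upgrade the size bound to hold with high probability, I exploit the locality asserted by Lemma~\ref{lem:possibleprob}: the event that $v$ is risky depends only on the randomness inside $B_{2(\lambda+2)\ell+1}(v)$, so risky indicators for nodes at pairwise $G$-distance exceeding twice this radius are mutually independent. I would partition $V(G)$ into $d^{O(\ell)}$ color classes of pairwise-far nodes via a greedy distance-$O(\ell)$ coloring, apply a multiplicative Chernoff bound inside each class (with an additive $O(\log n)$ slack to cover classes whose in-class expectation is below $\log n$), and union-bound over the classes to conclude $|V(G')|,|E(G')|\le np^{\Omega(\log\log_{1/p}n)}$ w.h.p.

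For the component diameter and size bounds, I use shattering-style enumeration. Suppose some component of $G'$ has $G$-diameter $D$; then there is a $G'$-path of at least $D+1$ insecure nodes, and since each is within $G$-distance $1$ of some risky node, greedy thinning to pairwise $G$-distance greater than $4(\lambda+2)\ell+2$ yields $m=\Omega(D/\ell)$ risky nodes with mutually independent risky events. The probability that they are all risky is at most $q^m$, and union-bounding over the at most $nd^D$ candidate paths gives $nd^Dq^m=n^{-\Omega(1)}$ once $D$ exceeds a sufficiently large constant times $\log_{1/p}n$, using $\log(1/q)=\Omega(\ell\log(1/p))$ and $\log(1/p)\ge 10\log d$. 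For the size bound, I employ a $2r$-tree (``moons'') argument with $r=2(\lambda+2)\ell+1$: every connected component of $G'$ of size $m$ contains a set of $\Omega(m/d^{O(r)})$ risky nodes that form a connected subgraph in $G^{O(r)}$ and are pairwise at $G$-distance $\ge 2r+1$, hence independently risky. Enumerating the at most $n(ed^{O(r)})^m$ such structures and applying the single-node bound gives $m\le d^{O(\log\log_{1/p}n)}$ w.h.p.

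Finally, the network decomposition follows by applying the deterministic Rozho\v{n}--Ghaffari decomposition to each component of $G'$ independently with $G$-distance as the metric. Since each component contains at most $d^{O(\log\log_{1/p}n)}$ nodes and has $G$-diameter $O(\log_{1/p}n)$ from the previous step, the decomposition produces the claimed $(O(\log\log_{1/p}n),O(\log^2\log_{1/p}n))$ parameters, after absorbing $\log d$ into $\log\log_{1/p}n$ via the polynomial criterion $p=d^{-(10+\Omega(1))}$. The main obstacle I anticipate is the component size bound in the third step: the right enumeration must be chosen (the $2r$-tree argument rather than naive connected-subgraph counting on $G^{O(r)}$) so that the $d^{O(\ell)}$-factor loss from thinning to independence is more than offset by the $p^{\Omega(\ell)}$ factor of the risky probability, and getting this balance right requires the polynomial-criterion hypothesis to be invoked carefully.
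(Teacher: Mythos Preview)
Your overall strategy matches the paper's: bound $\Prob{v\text{ risky}}$ via Lemma~\ref{lem:possibleprob}, use the $O(\ell)$-radius locality of that event for independence, and run a shattering-style enumeration. Your size bound for $G'$ via distance-coloring plus Chernoff is a legitimate alternative to the paper's ruling-set union bound (both need the standing assumption $d=n^{o(1)}$ so that $d^{O(\ell)}=n^{o(1)}$), and your $2r$-tree and path-thinning arguments are essentially the paper's ``ruling tree'' argument in different language.

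There is, however, a genuine gap in your network-decomposition step. Applying Rozho\v{n}--Ghaffari (or Linial--Saks) directly to a component of $G'$ with $N=d^{O(\ell)}$ nodes gives an $(O(\log N),\log^{O(1)} N)$-decomposition, and $\log N=\Theta(\ell\log d)$. Your claim that the $\log d$ factor can be ``absorbed into $\log\log_{1/p} n$ via the polynomial criterion'' is false: the criterion gives $\log d=\Theta(\log(1/p))$, which is $\omega(1)$ by the assumption $d=\omega(1)$ and is in general incomparable with $\ell=\Theta(\log\log_{1/p} n)$ (for instance, take $d=2^{\sqrt{\log n}}$). So you only obtain an $(O(\ell\log d),\dots)$-decomposition, not the claimed $(O(\ell),O(\ell^2))$.

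The missing idea is to \emph{contract first}. Your own $2r$-tree enumeration already shows that a $(2r{+}1,2r)$-ruling set of any component of $G'$ has at most $k=O\!\left(\tfrac{1}{\ell}\log_{1/p} n\right)$ nodes (this is exactly the $k$ at which your union bound $n(d^{O(r)}q)^k$ becomes $n^{-\Omega(1)}$). Contract each component onto its ruling set; the contracted graph has components of size at most $k$, hence $\log k=O(\log\log_{1/p} n)=O(\ell)$, and admits an $(O(\ell),O(\ell))$-decomposition. Un-contracting multiplies cluster diameters by $O(r)=O(\ell)$, giving the required $(O(\ell),O(\ell^2))$-decomposition of $G'$ with respect to $G$-distances. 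This contraction step is precisely how the paper obtains the bound, and it cannot be skipped.
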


\begin{proof}
	Recall that a node is risky if it is the root of a $2(\lambda+2)\ell$-possible $\eps$-narrow witness trees of size at least $\lambda\ell$. The probability that a particular node $v$ is the root of any such witness tree is at most 
	$2(5d^2p^\eps)^{\lambda\ell}$, by Lemma \ref{lem:possibleprob}, and for any $u$, $v$ with $dist_G(u,v)\ge 4(\lambda+2)\ell +3$, these events are independent. 
	
	Therefore, 
	
	\[\Prob{\text{$v$ is insecure}} \le (d+1) \Prob{\text{$v$ is risky}} \le 2(d+1)(5d^2p^\eps)^{\lambda\ell}\enspace,\]
	
	and for any $u$, $v$ with $dist_G(u,v)\ge 4(\lambda+2)\ell +5$, these events are independent.

	Let $G'$ be the graph induced by insecure nodes. Let $S$ be a $(4(\lambda+2)\ell +5,4(\lambda+2)\ell +4)$-ruling set of $G'$ under distances in $G$: that is, $S$ is such that every node $v\in G'$ has $dist_G(v,S)\le 4(\lambda+2)\ell +4$, and every two nodes $u,v\in S$ have $dist_G(u,v)\ge 4(\lambda+2)\ell +5$. The existence of such a set is evident since it can be constructed greedily: considering the nodes of $G'$ in any order, and add them to the set $S$ if they are at distance at least $4(\lambda+2)\ell +5$ from any current member of $S$.

	First, we show that $S$ cannot have more than $np^{0.7\lambda\ell}$ nodes, with very high probability. Note that, since we assume $p=n^{-o(1)}$,
	
\[
np^{0.7\lambda\ell} = np^{0.7\lambda\log_{\frac{1}{1-\eps}}\log_{1/p} n}
=np^{o(\log_{1/p} n)} = n^{1-o(1)}\enspace.\]
	
Then,

\begin{align*}
	\Prob{|S|\ge np^{0.7\lambda\ell}} &\le \sum\limits_{\substack{S'\subseteq V,\\ |S'| = np^{0.7\lambda\ell}}}\Prob{S'\subseteq S}\\
	&\le \sum\limits_{\substack{S'\subseteq V,\\ |S'| = np^{0.7\lambda\ell}}}\left(2(d+1)(5d^2p^{1-\eps})^{\lambda\ell} \right)^{np^{0.7\lambda\ell}}\\
	&\le \binom{n}{np^{0.7\lambda\ell}}d^{o(\ell np^{0.7\lambda\ell})}\left(d^2p^{1-\eps}\right)^{\lambda\ell np^{0.7\lambda\ell}}\\
	&\le  d^{o(\ell np^{0.7\lambda\ell})} \left(e p^{-0.7\lambda\ell}\right)^{np^{0.7\lambda\ell}}\left(d^2p^{1-\eps}\right)^{\lambda\ell np^{0.7\lambda\ell}}&&\text{using $\binom{a}{b}\le \left(\frac {ea}{b}\right)^b$}\\
	&= d^{o(\ell np^{0.7\lambda\ell})} \left(d^2p^{0.3-\eps}\right)^{\lambda\ell np^{0.7\lambda\ell}}\\
	&= d^{o(\ell np^{0.7\lambda\ell})} \left(d^{10}p^{1.5-5\eps}\right)^{0.2\lambda\ell np^{0.7\lambda\ell}}\\
	&\le d^{o(\ell np^{0.7\lambda\ell})} \left(p^{0.5-3\eps}\right)^{0.2\lambda\ell np^{0.7\lambda\ell}}&&\text{using $d^{10}\le p^{1-2\eps}$}\\
	&\le d^{o(\ell np^{0.7\lambda\ell})} p^{0.04\lambda\ell np^{0.7\lambda\ell}}&&\text{using $\eps<0.1$}\\
	&=p^{n^{1-o(1)}}\enspace.
\end{align*}

Since every node in $G'$ is within distance $4(\lambda+2)\ell +4$ of a node in $S$, and any ball of radius $4(\lambda+2)\ell +4$ in $G$ contains at most $2d^{4(\lambda+2)\ell +4}$ nodes, with probability $1-p^{n^{1-o(1)}}$, 
\[|V(G')|\le np^{0.7\lambda\ell}\cdot 2d^{4(\lambda+2)\ell +4} \le np^{0.3\lambda\ell} = np^{\Omega(\log\log_{1/p} n)} \enspace.\]

Clearly, this also gives a bound on the number of edges in $G'$:

\[|E(G')|\le d|V(G')| = np^{\Omega(\log\log_{1/p} n)} \enspace.\]

	Next, we bound the size of the connected components of $G'$. We consider trees $T$ on the nodes of $S$ such that any $u,v$ that are adjacent in $T$ have $dist_G(u,v)\le  8(\lambda+2)\ell +9$, which we will call \emph{ruling trees}. The purpose of this is that, for any connected component $CC$ of $G'$, we can construct such a ruling tree $T$ on the nodes of $S$ that are within distance (w.r.t. $G$) $d^{4(\lambda+2)\ell +4}$ of a node in $CC$. Then, \[|V(CC)| \le |V(T)|\cdot 2d^{4(\lambda+2)\ell +4} \text{ and } diam_G(CC) \le diam_G(T)\cdot (4(\lambda+2)\ell +4)\enspace.\]

So, if we can upper-bound the size of such ruling trees $T$, we can upper-bound the size of connected components in $G'$. We can obtain such a bound by the same argument as Lemma \ref{lem:treenumber}: we consider the process of conducting a DFS to construct the ruling tree. To construct a ruling tree of size at most $k$, this DFS has at most $4^{k}$ possible sequences of exploration and backtracking steps. Each exploration step now has $2d^{8(\lambda+2)\ell  +9}$ possible destinations (since adjacent ruling tree nodes are within distance $8(\lambda+2)\ell +9$ in $G$). So, there are at most $4^{k}\cdot 2d^{8k(\lambda+2)\ell +9k}$ trees of size $k$ rooted at a node $v$, and so at most $n 4^{k}\cdot 2d^{8k(\lambda+2)\ell +9k}$ overall. The probability that each such tree's nodes are insecure (which is necessary to be in $S$, and therefore for the tree to be a ruling tree) is at most $ \left(2(d+1)\right)^{k} (5d^2p^{1-\eps})^{k\lambda\ell}$, by Lemma \ref{lem:possibleprob}. Setting $k=\frac{2}{\ell}\log_{1/p} n$,
	
	\begin{align*}
	\Prob{\exists \text{ ruling tree }T: |V(T)|\ge k, v\in V(T)} &\le n4^{k}\cdot 2d^{8k(\lambda+2)\ell +5k}\cdot \left(2(d+1)\right)^{k}(5d^2p^{1-\eps})^{k\lambda\ell} \\
	&= n^{1+o(1)} d^{8k\lambda\ell}\cdot (d^2p^{1-\eps})^{k\lambda\ell} \\
	&= n^{1+o(1)}  (d^{10}p^{1-\eps})^{k\lambda\ell} \\
	&\le n^{1+o(1)}  p^{\eps k\lambda\ell} \\
	&= n^{1+o(1)}  p^{4\log_{1/p} n} \\
	&=o(n^{-2})\enspace.
	\end{align*}
	
So, w.h.p. there is no ruling tree of size $\frac{2}{\ell}\log_{1/p} n$. Then, no connected component of $G'$ has size more than $\frac{2}{\ell}\log_{1/p} n \cdot 2d^{4(\lambda+2)\ell +2} = d^{O(\log\log_{1/p} n)}$, or diameter more than $\frac{2}{\ell}\log_{1/p} n\cdot (4(\lambda+2)\ell +4) = O(\log_{1/p} n)$.

Finally, we show the existence of a network decomposition. Contract all nodes in $G'$ to their closest ruling set node in $S$ (under distances in $G$). Connected components in this contracted graph must be subgraphs of some ruling tree, and so have size at most $\frac{2}{\ell}\log_{1/p} n$. So, there exists an $(O(\log \log_{1/p} n),O(\log \log_{1/p} n))$-network decomposition of this contracted graph, which corresponds to an $(O(\log \log_{1/p} n),O(\log^2 \log_{1/p} n ))$-network decomposition of $G'$ (where the diameter of the decomposition parts is with respect to $G$). Indeed, such a network decomposition of $G'$ can be computed deterministically in $O(\log^{O(1)} \log_{1/p} n )$ \LOCAL rounds, using the now-standard argument of Barenboim, Elkin, Pettie and Schneider \cite{BEPS16}, equipped with the subsequent network decomposition algorithm of Rozho\v{n} and Ghaffari \cite{RG20}.

\end{proof}

Now, we have all of the required properties for Lemma \ref{lem:insecure}:

\begin{proof}[Proof of Lemma \ref{lem:insecure}]
By definition, whether a node $v$ is insecure depends only on its $O(\ell)$-radius neighborhood (since insecurity is determined by whether $v$ is adjacent to the root of certain $O(\ell)$-possible witness trees, which have $O(\ell)$ $G$-radius). 

By Lemma \ref{lem:mustberisky}, if $E_{good}$ occurs (which is the case with high probability), then any node resampled after some $\Theta(\ell)$ number of resampling steps must be risky. Secure nodes are not risky and have no risky neighbors, and so must have their final dependent variable values within $O(\ell)$ steps.

By Lemma \ref{lem:riskyproperties}, we have the remaining three properties required for \ref{lem:insecure}, with high probability:

\begin{itemize}
	\item $G'$ contains $np^{\Omega(\log\log_{1/p} n)}$ nodes
	\item The connected components of $G'$ are of size $d^{O(\log\log_{1/p} n)}$ and diameter $O(\log_{1/p} n )$
	\item $G'$ admits an $(O(\log \log_{1/p} n ), O(\log^2 \log_{1/p} n ))$ network decomposition
\end{itemize} 

\end{proof}
\section{Algorithms}

Having shown Lemma \ref{lem:insecure}, we can now give a general meta-algorithm for the LLL, before showing how this meta-algorithm can be implemented to improve results in various models. The meta-algorithm is as follows (Algorithm \ref{alg:meta}):

\begin{algorithm}[H]
	\caption{LLL Meta-algorithm}
	\label{alg:meta}
	\begin{algorithmic}
		\State Sample randomness table
		\State Identify risky and insecure nodes
		\State Simulate Algorithm \ref{alg:CPSLLL} (on the whole graph) for $(\lambda+2)\ell$ resampling steps
		\State Simulate Algorithm \ref{alg:CPSLLL} for $O(\log_{1/p} n)$  resampling steps only on the induced graph of risky nodes
	\end{algorithmic}
\end{algorithm}

We detail how this meta-algorithm can be implemented in the models we study:

\subsection{\LOCAL}

In \LOCAL, the meta-algorithm of Algorithm \ref{alg:meta} can be easily implemented directly, proving Theorem \ref{thm:local}: 

\begin{proof}[Proof of Theorem \ref{thm:local}]
	
$O(1)$ rounds suffice to sample the randomness table: nodes only need to agree which node should be responsible for sampling a particular variable, which can be done by breaking ties by ID (and unique IDs in $[n^3]$ can be uniquely generated at random w.h.p.). $O(\ell) = O(\log \log_{1/p} n)$ rounds are required to identify risky and insecure nodes, and simulate Algorithm \ref{alg:CPSLLL} for $(\lambda+2)\ell$ resampling steps. Then, all secure nodes can output their current variable values and terminate, and the insecure nodes continue simulating Algorithm \ref{alg:CPSLLL} for $O(\log_{1/p} n)$ further rounds. By Lemma \ref{lem:riskyproperties}, at most $np^{\Omega(\log \log_{1/p} n)}$ nodes are insecure. So, the average number of rounds per node until termination is at most

\[O(\log \log_{1/p} n) + O(\log_{1/p} n) \cdot p^{\Omega(\log \log_{1/p} n)} = O(\log \log_{1/p} n)\enspace.\]
\end{proof}

\subsection{\lca and \vol}
In \lca and \vol, Algorithm \ref{alg:meta} can be implemented by collecting the region of the graph that can affect the output of the queried node $v$. This region consists of the $O(\log\log_{1/p} n)$-radius neighborhood of the connected component of insecure nodes containing $v$ (or just of $v$ if $v$ is not insecure).

Given a query at a node $v$, denote set $S$ to be the connected component of insecure nodes containing $v$ if $v$ is insecure, and $\{v\}$ otherwise. The algorithm is then as follows:

\begin{algorithm}[H]
	\caption{LLL in \lca and \vol, querying a node $v$}
	\label{alg:LCA}
	\begin{algorithmic}
		\State Add $v$ to a stack $T$
		\While{$T$ is nonempty}
		\State Pop $u$ from the stack
		\State Probe the (unprobed part of the) $2(\lambda+2)\ell+2$-radius neighborhood of $u$
		\State If $u$ is insecure, add to $T$ all insecure neighbors of $u$ that have not previously been added
		\EndWhile
		
		\State Simulate Algorithm \ref{alg:CPSLLL} on the probed region. Nodes at distance $i\ge 0$ from $S$ are simulated for resampling steps $t\le (\lambda+2)\ell-\lfloor i/2\rfloor$, and then risky nodes $S$ are further simulated until termination

	\end{algorithmic}
\end{algorithm}

\begin{proof}[Proof of Theorem \ref{thm:lca}]
 Algorithm \ref{alg:LCA} collects the  $O(\log\log_{1/p} n)$-radius neighborhood of $S$, and by Lemma \ref{lem:riskyproperties}, this requires $d^{O(\log\log_{1/p} n)}$ probes.  To simulate some resampling step $t$ for any node $u$, we must have simulated step $t-1$ for all nodes within distance $2$ of $u$ (to know the current variable values, and whether $u$ is a locally minimal satisfied event). We prove by induction that we have this information for all simulated nodes:
 
Base case: $t=0$. Initially, nodes within distance $2(\lambda+2)\ell$ of $S$ simulate resampling step $0$ (the initial sampling of variables). This is possible since the $2(\lambda+2)\ell+2$-radius neighborhood of $S$ has been probed.

Inductive step: for some positive $t\le (\lambda+2)\ell$, assume the claim is true for $t-1$. Then, nodes $u$ with $(\lambda+2)\ell-\lfloor dist_G(u,S)/2\rfloor \ge t$ of $S$ attempt to simulate resampling step $t$. For any such node $u$, all $w$ within $u$'s $2$-hop neighborhood have 
\[(\lambda+2)\ell-\lfloor dist_G(w,S)/2\rfloor \ge (\lambda+2)\ell-\lfloor (dist_G(u,S)+2)/2\rfloor \ge t-1,\]

So, by the inductive assumption, all such $w$ simulated resampling step $t-1$, and therefore $u$ is able to simulate resampling step $t$.

All nodes in $S$ are then able to simulate the first $(\lambda+2)\ell$ resampling steps. By \ref{lem:mustberisky}, only risky nodes resample after resampling step $(\lambda+2)\ell$. So, if $v$ is not insecure, then all of its dependent variables are fixed by resampling step $(\lambda+2)\ell$, and so it can output. Otherwise, $v$'s output variable assignment is found by simulating Algorithm \ref{alg:CPSLLL} to termination on the risky nodes in $S$. This is possible because the only neighbors of such nodes are either risky (in which case they are also participating in the simulation), or non-risky (in which case they can no longer become satisfied and need to resample, w.h.p.), and so the variable values at each resampling step are known.

\end{proof}

\subsection{\congc, linear-space \MPC, and Heterogenous \MPC}

In this section we will describe the algorithm for Heterogenous \MPC; since \congc and linear-space \MPC are strictly stronger models, the results therein follow immediately.

First, we must slightly restrict the LLL instances we study: due to \MPC's space restrictions, we cannot allow arbitrarily large numbers of variables, and arbitrarily complex events, as we can for \lca, \vol, and \local. Let $|I|$ denote the total input size. We then define constrained instances as follows:

\begin{definition}\label{def:constrained}
A constrained LLL instance is any instance such that:
\begin{itemize}
	\item Each bad event depends on $poly(d)$ variables, and (given values for those variables) can be evaluated using $poly(d)$ words of space.
	\item Given values for all variables, there exists an $O(1)$-round Heterogenous \MPC algorithm that determines all satisfied bad events, that uses $O(|I|+n^{1+o(1)})$ total space.
\end{itemize}
\end{definition}

These restrictions are fairly modest, and we are not aware of any parallel or distributed application of the LLL that does not satisfy them.

\begin{proof}[Proof of Theorem \ref{thm:mpc}]
	
We divide the analysis into two cases: $d=n^{o(1)}$, and $d=n^{\Theta(1)}$. 

In the first case, we use the following algorithm for Heterogenous \MPC (Algorithm \ref{alg:MPC}):

\begin{algorithm}[H]
	\caption{LLL in Heterogenous \MPC }
	\label{alg:MPC}
	\begin{algorithmic}
		\State Assign a machine to each node $v$
		\State $v$'s machine collects its $2(\lambda+2)\ell+2$-radius neighborhood, in $O(\log \log \log_{1/p} n)$ rounds via graph exponentiation
		\State $v$'s machine determines whether it is insecure
		\State $v$'s machine simulates $(\lambda+2)\ell$ resampling steps to determine $v$'s current variable values
		\State Collect the induced graph of risky nodes to the single large machine
		\State The large machine simulates Algorithm \ref{alg:CPSLLL} on the induced graph of risky nodes
	\end{algorithmic}
\end{algorithm}	

We collect the $O(\log \log_{1/p} n)$-radius neighborhoods necessary to determine whether nodes are risky, and to find the output values for secure nodes, onto assigned machines (note that we will assign multiple nodes to single machines, as many as the space bound allows). These neighborhoods contain $d^{O(\log \log_{1/p} n)} = d^{o(\log_{d} n)} = n^{o(1)}$ nodes and edges, and by the definition of constrained instances, a further $poly(d) = n^{o(1)}$ space is required to express each event. So, in total, $n^{o(1)}$ space is required on each machine, which fits within the space bound $\mathfrak s$. The number of rounds required to collect these neighborhoods is $O(\log \log \log_{1/p} n)$, since graph exponentiation allows neighborhoods of doubling radius to be collected each round.

This information allows machines to output for secure nodes, and only the risky nodes remain to be simulated. Since, by Lemma \ref{lem:riskyproperties}, the induced graph of insecure nodes has overall size $np^{\Omega(\log\log_{1/p} n)}$, and each node requires $poly(d)$ space for event specification, this graph the specifications of its events fits into the $O(n)$ space of the large machine. Then, this large machine can compute and output values for the remaining variables with no further communication. So, when $d=n^{o(1)}$, $O(\log \log \log_{1/p} n)$ rounds suffice to solve LLL. The total space required is $n^{1+o(1)}$, since for each node we collected a neighborhood of size $n^{o(1)}$, with $n^{o(1)}$ space per node required to specify events.

In the second case, Algorithm \ref{alg:CPSLLL} requires only $O(\log_{1/p} n) = O(1)$ resampling steps, and we can therefore afford to simulate it round-by-round. Each resampling step requires us first to identify the satisfied events, which we assume can be done in $O(1)$ rounds under the definition of constrained instances. Then, we must find the set of locally minimal satisfied events, which can be performed in $O(1)$ rounds as follows:

\begin{itemize}
	\item Restrict attention to the induced graph $G_{sat}$ of satisfied events.
	\item Each satisfied node's edges can be distributed over a contiguous block of machines, via the $O(1)$-round sorting algorithm of \cite{GSZ11}. 
	\item A machine dedicated to each node $v$ can compute $\alpha_v = \sum_{\{u,v\}\in E(G_{sat})} \textbf{1}_{ID(u)<ID(v)}$. This is a simple aggregate function, so can be computed in $O(1)$-rounds by aggregating over a $\mathfrak s$-ary tree, which will have $O(1)$ depth.
	\item If $\alpha_v = 0$, $v$ is a locally minimal satisfied event.
\end{itemize}

Finally, we resample the locally minimal satisfied events, which takes only $1$ round. Thus, when $d=n^{\Theta(1)}$, $O(1)$ rounds suffice to solve LLL. We use $O(|I|)$ space, since input information is only duplicated $O(1)$ times.

Overall, we have an $O(\log \log \log_{1/p} n)$ round Heterogenous \MPC algorithm, using $O(|I|+n^{1+o(1)})$ total space. To simulate this algorithm in linear-space \MPC, we simply have each linear-space \MPC machine simulate the behavior of $n^{1-\eps}$ low-space \MPC machines, i.e. enough to meet the higher space bound. To run the algorithm in \congc, we use the known simulation based on Lenzen's routing \cite{L13}, described in \cite{BDH18}.
\end{proof}

\section{Conclusions and Open Problems}

We have presented an analysis of resampling-style LLL algorithms that demonstrates that the outputs values for most variables can be fixed based only on a region of size $d^{O(\log\log_{1/p} n)}$ in the dependency graph. We have further shown that this improves randomized algorithmic complexities for the LLL in several distributed and models focusing on graph locality and volume.

The main open problem regarding the distributed LLL is to determine whether there exists a $\log^{O(1)}\log n$-round worst-case complexity \local algorithm for the general LLL (i.e. matching our average-case complexity for all nodes). Our work here goes some way to supporting the existence of such an algorithm: as well as our node-averaged result, the \vol model result shows that a local algorithm can solve LLL while seeing only a neighborhood of volume $d^{O(\log\log_{1/p} n)}$, which is the size of the $O(\log\log_{1/p} n)$-radius neighborhood of a node. Furthermore, if we were able to remove the need for the single large machine from the Heterogenous \MPC algorithm, and obtain a low-space \MPC algorithm, then this would also provide strong evidence that a $\log^{O(1)}\log n$-round \local algorithm should be possible: currently there are no known randomized locally-checkable problems for which the \local complexity is more than exponential in the low-space \MPC algorithm complexity (see \cite{CDP24} for discussion).

It may appear that the analysis presented here almost suffices for such a \local algorithm: the variables of almost all events can be fixed in $\log^{O(1)}\log n$ \local rounds, such that the remaining graph shatters into small components, a $(O(\log\log n),\log^{O(1)}\log n)$-network decomposition can be computed on these components, and the LLL instance is solvable without changing the fixed variables. However, the only obvious way to find output values for the remaining variables is to continue running Algorithm \ref{alg:CPSLLL}, which still takes $\Theta(\log_{1/p} n)$ rounds, and it is not clear how the network decomposition could help accelerate this.

Other open problems of interest include considering the asymmetric LLL in parallel and distributed models, determining whether efficient algorithms can attain stricter LLL criteria such as Shearer's \cite{Shearer85} rather than the polynomial criterion here, derandomizing to obtain deterministic LLL algorithms, and studying specific applications of the LLL such as edge coloring.

\hide{
	\begin{theorem}
		LLL can be solved in \LOCAL with  $O(\log\log_{1/p} n)$ node-averaged complexity.
	\end{theorem}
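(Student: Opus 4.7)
The plan is to implement the meta-algorithm of Algorithm \ref{alg:meta} directly in \LOCAL, using Lemma \ref{lem:insecure} as a black box to control how much work each node must do. First I would have every node sample the entries of the randomness table corresponding to the variables it ``owns'' (where ownership among the dependent events of a variable is decided by breaking ties on random IDs in $[n^3]$, which are unique w.h.p.); this takes $O(1)$ rounds. Each node then gathers its $O(\log\log_{1/p} n)$-radius neighborhood in $O(\log\log_{1/p} n)$ rounds, which by the first item of Lemma \ref{lem:insecure} suffices to decide whether it is insecure, and also suffices to simulate the first $(\lambda+2)\ell = O(\log\log_{1/p} n)$ resampling steps of Algorithm \ref{alg:CPSLLL}.

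By the second item of Lemma \ref{lem:insecure}, after these $(\lambda+2)\ell$ steps every secure node already has its final variable values fixed w.h.p., so all secure nodes can output and terminate. The insecure nodes then continue to simulate Algorithm \ref{alg:CPSLLL} for $O(\log_{1/p} n)$ additional rounds; the worst-case round bound of Chung, Pettie, and Su \cite{CPS17} on the number of resampling steps gives the claimed $O(\log_{1/p} n)$ worst-case complexity. For the node-averaged bound, the third item of Lemma \ref{lem:insecure} says that at most $np^{\Omega(\log\log_{1/p} n)}$ nodes are insecure w.h.p., so the average termination time per node is at most
\[
O(\log\log_{1/p} n) \;+\; p^{\Omega(\log\log_{1/p} n)}\cdot O(\log_{1/p} n) \;=\; O(\log\log_{1/p} n),
\]
since the second term is $o(1)$.

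The main obstacle is not really located in this theorem: all the technical work has been packaged into Lemma \ref{lem:insecure}, so the proof reduces to a two-phase simulation followed by a one-line averaging calculation. The only subtle point to double-check is that secure nodes may indeed stop after $(\lambda+2)\ell$ rounds without later coordination with their insecure neighbors, but this is immediate from the second item of Lemma \ref{lem:insecure}: a secure node's dependent variables are never touched again, so no further communication is needed to preserve consistency with the simulated execution of Algorithm \ref{alg:CPSLLL} on the remaining insecure component.
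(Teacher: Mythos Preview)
Your proposal is correct and follows essentially the same approach as the paper: sample the randomness table in $O(1)$ rounds, spend $O(\log\log_{1/p} n)$ rounds to detect insecurity and simulate the first $(\lambda+2)\ell$ resampling steps, let secure nodes terminate, and have insecure nodes continue for $O(\log_{1/p} n)$ rounds, with the averaging calculation using the $np^{\Omega(\log\log_{1/p} n)}$ bound on insecure nodes. Your added remark that secure nodes need no further coordination is a nice explicit justification of a point the paper leaves implicit.
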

	
	\begin{proof}
		c
		\begin{algorithm}[H]
			\caption{LLL in \LOCAL, node-averaged complexity}
			\label{alg:LCA}
			\begin{algorithmic}
				\State Each node collects its $2(\lambda+2)\ell$-radius neighborhood and determines whether it is risky
				\State Non-risky nodes output their initial sampled variable values and terminate
				\State Risky nodes run Algorithm \ref{alg:CPSLLL}
			\end{algorithmic}
		\end{algorithm}	
		
	\end{proof}
}

 	\newcommand{\Proc}{Proceedings of the\xspace}
\newcommand{\STOC}{Annual ACM Symposium on Theory of Computing (STOC)}
\newcommand{\FOCS}{IEEE Symposium on Foundations of Computer Science (FOCS)}
\newcommand{\SODA}{Annual ACM-SIAM Symposium on Discrete Algorithms (SODA)}
\newcommand{\AISTATS}{International Conference on Artificial Intelligence and Statistics}
\newcommand{\COCOON}{Annual International Computing Combinatorics Conference (COCOON)}
\newcommand{\DISC}{International Symposium on Distributed Computing (DISC)}
\newcommand{\ESA}{Annual European Symposium on Algorithms (ESA)}
\newcommand{\ICALP}{Annual International Colloquium on Automata, Languages and Programming (ICALP)}
\newcommand{\ICML}{International Conference on Machine Learning}
\newcommand{\ICLR}{International Conference on Learning Representations}

\newcommand{\IPL}{Information Processing Letters}
\newcommand{\JACM}{Journal of the ACM}
\newcommand{\JALGORITHMS}{Journal of Algorithms}
\newcommand{\JCSS}{Journal of Computer and System Sciences}
\newcommand{\NEURIPS}{Conference on Neural Information Processing Systems}
\newcommand{\PODC}{Annual ACM Symposium on Principles of Distributed Computing (PODC)}
\newcommand{\SICOMP}{SIAM Journal on Computing}
\newcommand{\SPAA}{Annual ACM Symposium on Parallelism in Algorithms and Architectures (SPAA)}
\newcommand{\STACS}{Annual Symposium on Theoretical Aspects of Computer Science (STACS)}
\newcommand{\TALG}{ACM Transactions on Algorithms}
\newcommand{\TCS}{Theoretical Computer Science}
\bibliographystyle{plain}

\bibliography{LLL}

\begin{thebibliography}{10}

\bibitem{AGI20}
Dimitris Achlioptas, Themis Gouleakis, and Fotis Iliopoulos.
\newblock Simple local computation algorithms for the general {Lov{\'a}sz}
  local lemma.
\newblock In {\em Proceedings of the 32nd ACM Symposium on Parallelism in
  Algorithms and Architectures}, pages 1--10, 2020.

\bibitem{Alon91}
Noga Alon.
\newblock A parallel algorithmic version of the local lemma.
\newblock {\em Proceedings 32nd Annual Symposium of Foundations of Computer
  Science (FOCS)}, 1991.

\bibitem{ARVX12}
Noga Alon, Ronitt Rubinfeld, Shai Vardi, and Ning Xie.
\newblock Space-efficient local computation algorithms.
\newblock In {\em Proceedings of the twenty-third annual ACM-SIAM symposium on
  Discrete Algorithms}, pages 1132--1139. SIAM, 2012.

\bibitem{BGKO22}
Alkida Balliu, Mohsen Ghaffari, Fabian Kuhn, and Dennis Olivetti.
\newblock Node and edge averaged complexities of local graph problems.
\newblock In {\em Proceedings of the 2022 ACM Symposium on Principles of
  Distributed Computing}, PODC'22, page 4–14, New York, NY, USA, 2022.
  Association for Computing Machinery.

\bibitem{BEPS16}
Leonid Barenboim, Michael Elkin, Seth Pettie, and Johannes Schneider.
\newblock The locality of distributed symmetry breaking.
\newblock {\em J. ACM}, 63(3), 2016.

\bibitem{BY19}
Leonid Barenboim and Yaniv Tzur.
\newblock Distributed symmetry-breaking with improved vertex-averaged
  complexity.
\newblock In {\em Proceedings of the 20th International Conference on
  Distributed Computing and Networking}, ICDCN '19, page 31–40, New York, NY,
  USA, 2019. Association for Computing Machinery.

\bibitem{Beck91}
József Beck.
\newblock An algorithmic approach to the {Lov\'{a}sz} local lemma.
\newblock {\em Random Structures \& Algorithms}, 2(4):343--365, 1991.

\bibitem{BDH18}
Soheil Behnezhad, Mahsa Derakhshan, and MohammadTaghi Hajiaghayi.
\newblock Semi-mapreduce meets congested clique.
\newblock {\em arXiv preprint arXiv:1802.10297}, 2018.

\bibitem{BF+16}
Sebastian Brandt, Orr Fischer, Juho Hirvonen, Barbara Keller, Tuomo
  Lempi\"{a}inen, Joel Rybicki, Jukka Suomela, and Jara Uitto.
\newblock A lower bound for the distributed {Lov\'{a}sz} local lemma.
\newblock In {\em Proceedings of the Forty-Eighth Annual ACM Symposium on
  Theory of Computing (STOC)}, page 479–488, New York, NY, USA, 2016.
  Association for Computing Machinery.

\bibitem{BGR20}
Sebastian Brandt, Christoph Grunau, and V{\'a}clav Rozho{\v{n}}.
\newblock Generalizing the sharp threshold phenomenon for the distributed
  complexity of the {Lov{\'a}sz} local lemma.
\newblock In {\em Proceedings of the 39th Symposium on Principles of
  Distributed Computing}, pages 329--338, 2020.

\bibitem{BGR21}
Sebastian Brandt, Christoph Grunau, and V{\'a}clav Rozho{\v{n}}.
\newblock The randomized local computation complexity of the {Lov{\'a}sz} local
  lemma.
\newblock In {\em Proceedings of the 2021 ACM Symposium on Principles of
  Distributed Computing}, pages 307--317, 2021.

\bibitem{CHLPU19}
Yi-Jun Chang, Qizheng He, Wenzheng Li, Seth Pettie, and Jara Uitto.
\newblock Distributed edge coloring and a special case of the constructive
  {Lov\'{a}sz} local lemma.
\newblock {\em ACM Trans. Algorithms}, 16(1), November 2019.

\bibitem{CP19}
Yi-Jun Chang and Seth Pettie.
\newblock A time hierarchy theorem for the local model.
\newblock {\em SIAM Journal on Computing}, 48(1):33--69, 2019.

\bibitem{CPS17}
Kai-Min Chung, Seth Pettie, and Hsin-Hao Su.
\newblock Distributed algorithms for the {Lov\'{a}sz} local lemma and graph
  coloring.
\newblock {\em Distributed Computing}, 30(4):261--280, 2017.

\bibitem{CDP24}
Artur Czumaj, Peter Davies-Peck, and Merav Parter.
\newblock Component stability in low-space massively parallel computation.
\newblock {\em Distributed Computing}, 37(1):35, 2024.

\bibitem{CS00}
Artur Czumaj and Christian Scheideler.
\newblock A new algorithm approach to the general {Lov\'{a}sz} local lemma with
  applications to scheduling and satisfiability problems (extended abstract).
\newblock {\em Proceedings of the thirty-second annual ACM symposium on theory
  of computing (STOC)}, 2000.

\bibitem{Davies23a}
Peter Davies.
\newblock Improved distributed algorithms for the {Lovász Local Lemma} and
  edge coloring.
\newblock In {\em \SODA}, 2023.

\bibitem{DSW04}
D~Deng, Douglas~R Stinson, and Ruizhong Wei.
\newblock The {Lov{\'a}sz local lemma} and its applications to some
  combinatorial arrays.
\newblock {\em Designs, Codes and Cryptography}, 32(1):121--134, 2004.

\bibitem{DK23}
Andrzej Dorobisz and Jakub Kozik.
\newblock Local computation algorithms for hypergraph coloring-following
  {Beck’s} approach.
\newblock In {\em 50th International Colloquium on Automata, Languages, and
  Programming (ICALP 2023)}. Schloss-Dagstuhl-Leibniz Zentrum f{\"u}r
  Informatik, 2023.

\bibitem{EL74}
Paul Erdős and László Lovász.
\newblock Problems and results on 3-chromatic hypergraphs and some related
  questions.
\newblock {\em Coll. Math. Soc. J Bolyai}, 10, 01 1974.

\bibitem{F20}
Laurent Feuilloley.
\newblock How long it takes for an ordinary node with an ordinary {ID} to
  output?
\newblock {\em Theoretical Computer Science}, 811:42--55, 2020.

\bibitem{FG17}
Manuela Fischer and Mohsen Ghaffari.
\newblock Sublogarithmic distributed algorithms for {Lov{\'a}sz} local lemma,
  and the complexity hierarchy.
\newblock In {\em 31st International Symposium on Distributed Computing (DISC
  2017)}, volume~91 of {\em Leibniz International Proceedings in Informatics
  (LIPIcs)}, pages 18:1--18:16, Dagstuhl, Germany, 2017.

\bibitem{FHO22}
Orr Fischer, Adi Horowitz, and Rotem Oshman.
\newblock Massively parallel computation in a heterogeneous regime.
\newblock In {\em Proceedings of the 2022 ACM Symposium on Principles of
  Distributed Computing}, pages 345--355, 2022.

\bibitem{GMSW09}
Heidi Gebauer, Robin~A. Moser, Dominik Scheder, and Emo Welzl.
\newblock {\em The {Lov\'{a}sz} Local Lemma and Satisfiability}, pages 30--54.
\newblock Springer Berlin Heidelberg, Berlin, Heidelberg, 2009.

\bibitem{Ghaffari16}
Mohsen Ghaffari.
\newblock An improved distributed algorithm for maximal independent set.
\newblock Proceedings of the 2016 ACM-SIAM Symposium on Discrete Algorithms
  (SODA), page 270–277, USA, 2016. Society for Industrial and Applied
  Mathematics.

\bibitem{GHK18}
Mohsen Ghaffari, David~G Harris, and Fabian Kuhn.
\newblock On derandomizing local distributed algorithms.
\newblock In {\em Proceedings of the 2018 IEEE 59th Annual Symposium on
  Foundations of Computer Science (FOCS)}, pages 662--673, 2018.

\bibitem{GKU19}
Mohsen Ghaffari, Fabian Kuhn, and Jara Uitto.
\newblock Conditional hardness results for massively parallel computation from
  distributed lower bounds.
\newblock In {\em 2019 IEEE 60th Annual Symposium on Foundations of Computer
  Science (FOCS)}, pages 1650--1663. IEEE, 2019.

\bibitem{GSZ11}
Michael~T Goodrich, Nodari Sitchinava, and Qin Zhang.
\newblock Sorting, searching, and simulation in the mapreduce framework.
\newblock In {\em International Symposium on Algorithms and Computation}, pages
  374--383. Springer, 2011.

\bibitem{HH17}
Bernhard Haeupler and David~G Harris.
\newblock Parallel algorithms and concentration bounds for the {Lov{\'a}sz
  local lemma via witness DAGs}.
\newblock {\em ACM Transactions on Algorithms (TALG)}, 13(4):1--25, 2017.

\bibitem{HM24}
{Magn{\'u}s M.} Halld{\'o}rsson and Yannic Maus.
\newblock Distributed delta-coloring under bandwidth limitations.
\newblock In Dan Alistarh, editor, {\em 38th International Symposium on
  Distributed Computing, DISC 2024}, Leibniz International Proceedings in
  Informatics, LIPIcs, Germany, October 2024. Schloss Dagstuhl -
  Leibniz-Zentrum f{\"u}r Informatik.
\newblock Publisher Copyright: {\textcopyright} Magn{\'u}s M. Halld{\'o}rsson
  and Yannic Maus.; 38th International Symposium on Distributed Computing, DISC
  2024 ; Conference date: 28-10-2024 Through 01-11-2024.

\bibitem{HMP24}
Magn{\'u}s~M Halld{\'o}rsson, Yannic Maus, and Saku Peltonen.
\newblock Distributed {Lov\'{a}sz} local lemma under bandwidth limitations.
\newblock {\em arXiv preprint arXiv:2405.07353}, 2024.

\bibitem{Harris23}
David~G Harris.
\newblock Deterministic algorithms for the {Lov\'{a}sz} local lemma: simpler,
  more general, and more parallel.
\newblock {\em Random Structures \& Algorithms}, 63(3):716--752, 2023.

\bibitem{KMW04}
Fabian Kuhn, Thomas Moscibroda, and Rogert Wattenhofer.
\newblock What cannot be computed locally!
\newblock In {\em Proceedings of the twenty-third annual ACM symposium on
  Principles of distributed computing}, pages 300--309, 2004.

\bibitem{LMR94}
Frank~Thomson Leighton, Bruce~M Maggs, and Satish~B Rao.
\newblock Packet routing and job-shop scheduling in {O(congestion+ dilation)}
  steps.
\newblock {\em Combinatorica}, 14(2):167--186, 1994.

\bibitem{L13}
Christoph Lenzen.
\newblock Optimal deterministic routing and sorting on the congested clique.
\newblock In {\em Proceedings of the 2013 ACM symposium on Principles of
  distributed computing}, pages 42--50, 2013.

\bibitem{Linial92}
Nathan Linial.
\newblock Locality in distributed graph algorithms.
\newblock {\em SIAM Journal on computing}, 21(1):193--201, 1992.

\bibitem{MU21}
Yannic Maus and Jara Uitto.
\newblock Efficient congest algorithms for the lov{\'a}sz local lemma.
\newblock In {\em 35th International Symposium on Distributed Computing (DISC
  2021)}. Schloss-Dagstuhl-Leibniz Zentrum f{\"u}r Informatik, 2021.

\bibitem{MR98}
Michael Molloy and Bruce Reed.
\newblock Further algorithmic aspects of the local lemma.
\newblock {\em Proceedings of the thirtieth annual ACM symposium on Theory of
  computing (STOC)}, 1998.

\bibitem{MT10}
Robin~A. Moser and Gábor Tardos.
\newblock A constructive proof of the general {Lov\'{a}sz} local lemma.
\newblock {\em Journal of the ACM}, 57(2):1–15, Jan 2010.

\bibitem{PT09}
J{\'a}nos Pach and G{\'a}bor Tardos.
\newblock Conflict-free colourings of graphs and hypergraphs.
\newblock {\em Combinatorics, Probability and Computing}, 18(5):819--834, 2009.

\bibitem{RS20}
Will Rosenbaum and Jukka Suomela.
\newblock Seeing far vs. seeing wide: Volume complexity of local graph
  problems.
\newblock In {\em Proceedings of the 39th \PODC}, pages 89--98, 2020.

\bibitem{RG20}
V\'{a}clav Rozho\v{n} and Mohsen Ghaffari.
\newblock Polylogarithmic-time deterministic network decomposition and
  distributed derandomization.
\newblock In {\em Proceedings of the 52nd Annual ACM SIGACT Symposium on Theory
  of Computing (STOC)}, page 350–363, New York, NY, USA, 2020. Association
  for Computing Machinery.

\bibitem{RTVX11}
Ronitt Rubinfeld, Gil Tamir, Shai Vardi, and Ning Xie.
\newblock Fast local computation algorithms.
\newblock In {\em Proceedings of the 2nd Symposium on Innovations in Computer
  Science (ICS)}, page 223–238, 2011.

\bibitem{Shearer85}
James~B. Shearer.
\newblock On a problem of {Spencer}.
\newblock {\em Combinatorica}, 5(3):241--245, 1985.

\end{thebibliography}

\end{document}